
\documentclass[11pt]{article}

\usepackage[left=1in, top=1in, right=1in, bottom=1in]{geometry}

\usepackage[numbers]{natbib}
\usepackage[utf8]{inputenc} % allow utf-8 input
\usepackage{nicefrac}       % compact symbols for 1/2, etc.
\usepackage{amsmath,amsthm,amsfonts,dsfont}
\usepackage{amssymb}
\usepackage{comment}
\usepackage{graphicx}
\usepackage{booktabs} % For formal tables
\usepackage[ruled]{algorithm2e} % For algorithms
\usepackage{algorithmic}

\usepackage{enumitem}
\usepackage{wrapfig}
\usepackage{xcolor}
\usepackage{thm-restate}
\usepackage{hyperref}
\usepackage[capitalize]{cleveref}

% Choose a citation style by commenting/uncommenting the appropriate line:
%\setcitestyle{authoryear}
% \setcitestyle{acmnumeric}

%\usepackage{color-edits}
\usepackage[suppress]{color-edits}
\addauthor{ag}{blue}
\addauthor{vg}{red}
\addauthor{sb}{magenta}
\addauthor{bj}{olive}

\usepackage{bbm}

\SetAlFnt{\small}
\SetAlCapFnt{\small}
\SetAlCapNameFnt{\small}
\SetAlCapHSkip{0pt}
\IncMargin{-\parindent}

\def\algoname{Set-Aside Greedy}
\def\totalvalue{1}

\newcommand{\pv}{\mathbf{p}}
\newcommand{\bz}{\mathbf{z}}
\newcommand{\bx}{\mathbf{x}}
\newcommand{\xv}{\mathbf{x}}
\newcommand{\bv}{\mathbf{v}}

\newcommand{\CRnsw}{\gamma^{\mathsf{NSW}}}
\newcommand\wtd[1]{\widetilde{#1}}
\usepackage{multicol}

\newcommand\abs[1]{\left\lvert{#1}\right\rvert}

\newtheorem{theorem}{Theorem}
\newtheorem{proposition}{Proposition}

\newtheorem{lemma}{Lemma}

\newtheorem{definition}{Definition}

\newtheorem{example}{Example}

\newcommand*\samethanks[1][\value{footnote}]{\footnotemark[#1]}

\usepackage{hyperref}       % hyperlinks
\usepackage{url}            % simple URL typesetting

% Title. Note the optional short title for running heads. In the interest of anonymization, please do not include any acknowledgements.
%\title{Greedy but Cautious: Optimizing Nash Welfare in Online Allocations}
%\title{Optimizing the Nash Social Welfare in Online Allocation of divisible goods}
\title{Online Nash Social Welfare Maximization with Predictions}

\author{Siddhartha Banerjee\thanks{School of ORIE, Cornell University, Ithaca, NY. Supported by NSF grants ECCS-1847393, DMS-1839346, CNS-1955997, CCF-1908517, and NSERC fellowship PGSD3-532673-2019. Emails: \texttt{\{sbanerjee, bzj3\}@cornell.edu}.}  \and Vasilis Gkatzelis\thanks{Department of Computer Science, Drexel University. Supported by NSF grant CCF-1755955. Email: \texttt{gkatz@drexel.edu}} \and Artur Gorokh\thanks{Facebook, New York, NY. Email: \texttt{ag2282@cornell.edu}.} \and Billy Jin\samethanks[1]
% \thanks{School of ORIE, Cornell University, Ithaca, NY. Supported by NSERC fellowship PGSD3-532673-2019. Email: \texttt{bzj3@cornell.edu}.}
}
% \author{Submission 29}

\date{}

\begin{document}

\maketitle

%\pagenumbering{gobble}

\begin{abstract}
We consider the problem of allocating a set of divisible goods to $N$ agents in an online manner, aiming to maximize the Nash social welfare, a widely studied objective which provides a balance between fairness and efficiency. The goods arrive in a sequence of $T$ periods and the value of each agent for a good is adversarially chosen when the good arrives.
We first observe that no online algorithm can achieve a competitive ratio better than the trivial $O(N)$, unless it is given additional information about the agents' values. 

% Then, in line with the emerging area of ``algorithms with predictions'', our main result is an online algorithm with a competitive ratio of $O(\log N)$ and $O(\log T)$ when provided with a prediction regarding the sum of each agent's values over the $T$ periods. 
Then, in line with the emerging area of ``algorithms with predictions'', we consider a setting where for each agent, the online algorithm is only given a prediction of her \emph{monopolist utility, i.e., her utility if all goods were given to her alone (corresponding to the sum of her values over the $T$ periods)}. Our main result is an online algorithm whose competitive ratio is parameterized by the multiplicative errors in these predictions. The algorithm achieves a competitive ratio of $O(\log N)$ and $O(\log T)$ if the predictions are perfectly accurate. Moreover, the competitive ratio degrades smoothly with the errors in the predictions, and is surprisingly robust: the logarithmic competitive ratio holds even if the predictions are very inaccurate.

%, and also a stronger competitive ratio of $O(\log k)$ in settings where the value of an agent in any period is no more than $k$ times her average value.
We complement this positive result by showing that our bounds are essentially tight: no online algorithm, even if provided with perfectly accurate predictions, can achieve a competitive ratio of 
$O(\log^{1-\epsilon} N)$ or $O(\log^{1-\epsilon} T)$ for any constant $\epsilon>0$.

%\vgcomment{Update paper to remove any references to ``normalization''}

\end{abstract}

% Title page for title and abstract only.

% \begin{titlepage}

% \pagenumbering{gobble}

% \end{titlepage}

%\newpage
\pagenumbering{arabic}
\setcounter{page}{1}

% Paper body
\section{Introduction}
\label{sec:introduction}

We study an online resource allocation problem where each day some divisible resource becomes available, and we need to design an online algorithm that distributes it among a set of $N$ agents, aiming to reach an outcome that combines fairness and efficiency. At the start of each day $t$, every agent $i$ informs the algorithm about their value $v_{i,t}$ for that day's resource, and the algorithm irrevocably decides how to split this resource, without knowing the agents' future values. If agent~$i$ is allocated a fraction $x_{i,t}$ of the resource, then her utility increases by $x_{i,t}v_{i,t}$, and after all the resources have been allocated, the total utility of an agent $i$ is $u_i(\bx) =  \sum_t v_{i,t} x_{i,t}$. %To ensure that all agents are on the same scale, we assume their values are normalized (i.e., $\sum_tv_{i,t}=1\,\forall\,i$).

If one were only concerned about efficiency, a common objective to maximize is the utilitarian social welfare -- that is, the sum of the agents' utilities, $\sum_i u_i(\bx)$. This is easy to maximize even in an online setting: allocate the entire resource to the agent $i$ with the highest value $v_{i,t}$ on each day~$t$. However, it is easy to see that this approach can ``starve'' many of the agents by never allocating anything to them, which is unacceptable in many settings of interest.

As an example, consider a setting where the processing time of a shared computing cluster needs to be divided among the employees of a firm or university. Each user may have a high value for gaining access to the cluster on some days (e.g., due to a conference deadline), and be willing to pass up on her access on other days. Although it makes sense to prioritize the users who need the resource the most each day, every user deserves some access to this shared resource.
For another example that has received some attention, consider a food bank that allocates food each day to soup kitchens and other local charities (see e.g. \citep{prendergast2017}). The number of people coming to each distribution facility varies from day to day, affecting each facility's demand for food, but, apart from efficiency, the food bank also wants to ensure some measure of equitability in its allocation.

Motivated by the need for fairness considerations when allocating these resources, we turn to the literature on fair division. An extreme measure of fairness from this literature is the egalitarian social welfare, or \emph{maximin welfare} (MW), which is equal to the minimum utility over all agents, $\min_i \{u_i(\bx)\}$. This objective interprets fairness as making sure that the least satisfied agent is as happy as possible. However, maximizing this objective can lead to highly inefficient outcomes by allocating resources to agents that are hard to satisfy instead of agents that can really benefit from them; moreover, the allocation is highly sensitive to perturbations in valuations. Although we briefly address this objective, the focus of this paper is on the \emph{Nash social welfare} (NSW), a much less extreme objective that is known to provide a natural balance between fairness and efficiency. The NSW objective is equal to the geometric mean of the agents' utilities, $\prod_i u_i(\bx)^{1/N}$; it was initially proposed about 70 years ago~\citep{Nash50,KN79}, and has recently received a great deal of attention. Apart from a balance between fairness and efficiency, it is also known to satisfy other desirable properties, including scale-independence, meaning that the scale of the agents' valuations does not affect the NSW maximizing solution~\citep{Moulin03}. 

Unfortunately, in contrast to utilitarian social welfare which is easy to optimize in the online setting, the problem becomes much more challenging when fairness considerations are introduced. Although computing an allocation that maximizes the NSW (or the MW) objective is computationally tractable problem in an offline  setting~\cite[Chapter 5]{AGT}, we prove that even sublinear competitive ratios are impossible to achieve online. To get some intuition for this, note that achieving fairness may require allocating resources to agents other than the one with the highest value for them. To decide if and when this should happen, offline algorithms use agents' valuations for \emph{all} the resources, whereas online algorithms have only historical information and cannot foresee which agents will be hard to satisfy in the future. %before they see all their values. 
In line with the exciting emerging literature on ``algorithms with predictions'' (e.g., see~\cite{AlgorithmsWithPredictions}), our goal in this paper is to overcome this overly pessimistic impossibility result by analyzing the performance of online algorithms that are augmented with predictions.
%. To this end, we propose an online algorithm that guarantees a competitive ratio of $O(\log(\min\{T,N\}))$, and on the negative side, we demonstrate a family of instances on which, for any chosen $\epsilon$, no online algorithm can achieve competitive ratio of better than $\Omega(\log^{1-\epsilon} N)$ or $\Omega(\log^{1-\epsilon} T)$.

\subsection{Our Results}

We first observe that, in the absence of any predictions regarding the agents' values, no online algorithm aiming to maximize the Nash social welfare can achieve a competitive ratio better than $O(\min\{N,T\})$, where $N$ is the number of agents and $T$ is the total number of days (or resources). However, this pessimistic result heavily depends on the (often very unrealistic) assumption that the algorithm has \emph{no} information regarding agents' values.

To overcome analogous impossibility results, recent work on \emph{algorithms with predictions} or \emph{learning-augmented algorithms} instead assumes that the algorithms are equipped with some side-information, e.g., learned from historical data. Following this approach, our main result is an online algorithm, the \algoname\ Algorithm, which is augmented with a prediction $\wtd{V}_i$ of \sbedit{each agent $i$'s total \emph{monopolist} value $V_i=\sum_t v_{i,t}$, i.e., her utility if she was allocated \emph{all} the items. Note that this is a fairly mild amount of side-information -- only one number per agent. Nevertheless,} \algoname\ achieves an exponential improvement over the $O(\min\{N,T\})$ hardness result, achieving a competitive ratio of $O(\log(\min\{N,T\}))$ if the predictions are perfect. Moreover, the competitive ratio of \algoname\ \sbreplace{can be parameterized by the multiplicative errors in the predictions}{is good even with very poor predictions}. Specifically, suppose each agent $i$ overestimates her true total value by at most a factor $c_i$, or underestimates it by at most a factor $d_i$, i.e., $\wtd{V_i}\in \left[\frac{1}{d_i} V_i, ~ c_i V_i\right]$. The main result of this paper is the following theorem, which parameterizes the competitive ratio of \algoname\ as a function of $c_i$ and $d_i$.
\begin{restatable}{theorem}{main}
\label{thm:main_result}
The \algoname\ algorithm achieves competitive ratio 
$$
\CRnsw
\leq 
\left(\prod_{i=1}^N c_i\right)^{\frac{1}{N}}  
\min\left\{\log(2N) + \frac{1}{N}\sum_{i=1}^N\log(d_i), \, \log(2T) + \log(\max_i\{d_i\}) \right\}.$$
\end{restatable}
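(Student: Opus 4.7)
The plan is to bound the competitive ratio by combining the trivial upper bound $\mathrm{OPT} \leq (\prod_i V_i)^{1/N}$ with a carefully constructed lower bound on the algorithm's utility vector $(u_i)_{i\in[N]}$. Since $\CRnsw = (\prod_i \mathrm{OPT}_i/u_i)^{1/N}$ and no allocation can give agent $i$ more than her total value $V_i$, the task reduces entirely to a log-space lower bound on the $u_i$ relative to the $V_i$.

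Given the algorithm's name, I expect it reserves part of each good for baseline guarantees and allocates the remainder greedily. A first naive attempt is to show a per-agent guarantee $u_i \gtrsim \wtd{V}_i/c_i$ up to a log-factor $L$, which combined with $\wtd{V}_i \geq V_i/d_i$ would yield $V_i/u_i \lesssim c_i d_i L$ and a competitive ratio of $(\prod_i c_id_i)^{1/N} L$. However, the theorem's bound treats the $d_i$'s much more gently — they appear \emph{inside} the logarithm, not multiplying it. This suggests the proper analysis is in log-space directly, via a potential function of the form $\Phi_t = \sum_i \log(u_{i,t} + r_i)$ with reserve levels $r_i$ scaled by $\wtd{V}_i$; bounding the evolution of $\Phi$ should turn the multiplicative errors $d_i$ into additive $\log d_i$ errors that get averaged across agents.

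For the first bound $\log(2N) + \frac{1}{N}\sum_i \log d_i$, I would set reserves of the form $r_i = \wtd{V}_i / (\alpha N)$ for a small constant $\alpha$ and track $\Phi$ across allocation steps: the $\log(2N)$ should arise as the log of the reserve scaling, while $\frac{1}{N}\sum_i \log d_i$ captures the log-space deficit when $\wtd{V}_i$ underestimates $V_i$. The factor $(\prod_i c_i)^{1/N}$ sits outside the logarithm because over-reserved quantities are irrecoverable waste, scaling each $u_i$ down by $c_i$ multiplicatively. For the alternative bound $\log(2T) + \log(\max_i d_i)$, I would instead use an online primal-dual analysis on the Eisenberg--Gale program, accumulating slack per time step rather than per agent; the $\log T$ term is the classical online primal-dual slack, while the $\max_i d_i$ (rather than the geometric mean) appears because the worst underestimate determines the number of steps during which greedy allocation must ``catch up'' to true values.

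The main obstacle will be designing the reserve schedule and associated potential so that all three error regimes are handled simultaneously without cascading losses: accurate predictions (where reserves alone suffice), under-estimates with large $d_i$ (where the greedy portion must compensate once actual values exceed the reserve budget), and over-estimates with large $c_i$ (where part of the reserve is wasted but the loss must remain multiplicative in $c_i$ only). Once the potential inequality is in place, converting it into the stated multiplicative competitive ratio by exponentiation and AM--GM should be routine, as should choosing the two accounting schemes to produce the two arms of the $\min$.
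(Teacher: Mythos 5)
There is a genuine gap at the very first step. You reduce the problem to lower-bounding the algorithm's utilities $u_i$ against the monopolist values $V_i$, using $\mathrm{OPT}_i\le V_i$. This reduction cannot yield the theorem: take $N$ identical agents with $v_{i,t}=1/T$ for all $i,t$, so $V_i=1$. The NSW-optimal allocation gives each agent only $1/N$, and so does \algoname, so the true competitive ratio is $1$; but your certificate $\bigl(\prod_i V_i/u_i\bigr)^{1/N}$ equals $N$. No lower bound on $u_i$ relative to $V_i$ can certify anything better than $\Theta(N)$ here, because the set-aside half guarantees only $u_i\ge V_i/(2N)$ and the greedy half cannot improve on that in the worst case. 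The paper instead compares against $\mathrm{OPT}$ directly via an LP-duality certificate: for any prices $p_t\ge v_{i,t}/(c_i u_i(\hat\bx))$ one gets $\CRnsw\le(\prod_i c_i)^{1/N}\,\frac{1}{N}\sum_t p_t$ (by weak duality applied to the Eisenberg--Gale-style dual, followed by AM--GM). That certificate is what absorbs the factor $N$ in the identical-agents example, and it is indispensable.

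Your potential $\Phi_t=\sum_i\log(u_{i,t}+r_i)$ with $r_i=\wtd V_i/(2N)$ is exactly the right object (it is $\sum_i\log\tilde u_{i,t}$ in the paper's notation), but its role is the opposite of what you describe: the paper \emph{upper}-bounds the total increase $\Phi_T-\Phi_0\le N\log(2N)+\sum_i\log d_i$ and uses it to control the sum of dual prices, rather than lower-bounding utilities. The missing link is the per-round inequality $\tilde p_t\le 2(\Phi_t-\Phi_{t-1})$, which requires (i) the KKT/market-equilibrium property that the greedy step equalizes $v_{i,t}/\tilde u_{i,t}$ over all agents receiving a positive share, so that $\tfrac12\tilde p_t=\sum_i \hat z_{i,t} v_{i,t}/\tilde u_{i,t}$, and (ii) the elementary bound $1-x\le-\ln x$. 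Without tying the potential increments to a feasible dual solution, the telescoping sum says nothing about the competitive ratio. Finally, both arms of the $\min$ come from this same telescoping bound; they differ only in whether the terminal term $\sum_i\log(\tilde u_{i,T}/\wtd V_i)$ is bounded agent-by-agent (giving $\frac1N\sum_i\log d_i$) or by AM--GM across agents (giving $\log(Td/N)$ with $d=\max_i d_i$), so no separate per-time-step accounting scheme is needed for the $\log T$ arm.
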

The competitive ratio is very robust: $O(\log(\min\{N,T\}))$ holds as long as $\left(\prod_{i=1}^N c_i\right)^{\frac{1}{N}}=O(1)$ and $\frac{1}{N}\sum_{i=1}^N\log(d_i)=O(\log(\min\{N,  T\}))$ (e.g., it holds if all the agents are underestimated by a factor $\min\{N, T\}$ or a $1/\log(N)$ fraction of the agents' values are overestimated by a factor $N$) 

The algorithm works by dividing each resource in two halves and allocating each half via a different approach: the first half is split equally between agents and the second half is allocated in a greedy manner aiming to myopically maximize the NSW. However, this combination of the two approaches would fail, were it not for a crucial adjustment to the agent utilities used in the greedy portion of the algorithm. In particular, on each day $t$, instead of using the actual utility that each agent has accrued up to that day, the greedy algorithm uses a ``predicted utility", which for each agent on day $t$ corresponds to the actual utility she has accrued from the greedy allocation up to that day, plus a $1/(2N)$ fraction of her predicted total utility $\wtd{V}_i$. Crucially, though the agent has not yet accrued this additional utility, she is \emph{guaranteed} to eventually receive some fraction of it (based on the quality of the prediction) from her promised share of the first half of the resources.

%In many settings, it is reasonable to assume that the algorithm has access to such predictions; for example, they can be estimated from historical data. In addition, having access to a prediction of $V_i$ is much weaker than assuming the algorithm knows the distribution of the values $v_{i,t}$ that arrive in each round -- a common assumption in online algorithms. Moreover, \Cref{sec:unscaled} shows that if no information whatsoever about the valuations is known to the algorithm, then it is impossible to obtain a non-trivial competitive ratio. 

Our second main result shows that the \algoname\ algorithm is essentially optimal, by proving that no online algorithm, even if it knows the exact $V_i$ values for every $i$, can achieve a competitive ratio of $O(\log^{1-\epsilon}{N})$ or $O(\log^{1-\epsilon}{T})$ for any constant $\epsilon>0$ (see \cref{thm:main_negative_result}). This construction is quite involved so, in order to provide some intuition, we first prove an analogous result for the maximin welfare objective. Specifically, we show that no online algorithm can achieve a competitive ratio better than $O(\sqrt{N})$ for the maximin objective, even if it is equipped with perfect predictions $\wtd{V}_i=V_i$ (see \cref{thm:impossibility_maxmin}). Although this objective is quite different than the Nash social welfare, the technique we use for this result resembles the (much more demanding) one used for the Nash social welfare, and the proof also provides some intuition regarding the types of obstacles that arise when introducing fairness considerations in online resource allocation.
%We also give a stronger negative result for another popular fairness objective, egalitarian welfare.

\medskip
\noindent\textbf{Technical Highlights.}
The use of predicted utilities plays a central role in our algorithm. From an algorithmic standpoint, it avoids a ``cold start'' problem that the myopic greedy approach suffers from. For example, suppose there are agents whose accrued utility so far is low (or zero). In this case, a myopic greedy approach would overcompensate by allocating large portions of the resources to them, even if their values for these resources are low. In~\cref{prop:myopic_greedy}, we make this intuition precise by showing an instance for which myopic greedy gets a competitive ratio of $\Omega(N)$.
%For example, the geometric mean of the agent valuations would be equal to zero until all of the agents have received some utility, so the initial decisions of this algorithm would waste a lot of resources, leading to a competitive ratio of $O(N)$ (see~\cref{prop:myopic_greedy}). 

In contrast, \algoname\ circumvents this issue by first predicting a minimum utility of $\wtd{V}_i/(2N)$ for each agent $i$ (using the first half of each resource), and then allocating the second half in a greedy manner \emph{accounting for the $\wtd{V}_i/(2N)$ future utility predicted for each agent $i$}. This provides a novel way for an online algorithm to use a prediction and could be of independent interest within the literature on algorithms with predictions when dealing with multiagent resource allocation problems.
%\vgcomment{The following comment may not be ideal for SODA. We could maybe include it in the journal version instead?}
%From an economic standpoint, this has parallels to the notion of the `property rights' of an agent in a collective to benefit from a shared resource. Our work in a sense gives an algorithmic rationale for the need for such rights, by showing their use in optimizing NSW in an online setting.

%Intuitively, this modification prevents the greedy algorithm from disproportionately prioritizing agents with low accrued utility in early days, even if their value for the available resources is low. At the same time, the algorithm never over-promises, i.e., it ensures that the agents will eventually receive a utility at least as high as their promised utility in any day. 

Apart from their conceptual significance, our use of predictions in competitive analysis may also be of independent interest. 
At a high level, we leverage the Eisenberg-Gale program for optimal NSW allocations to construct a dual certificate to bound the optimality gap of an arbitrary allocation. We then use a novel potential function to track changes in this dual certificate. 
%Namely, we design a bounded function that goes up a certain amount whenever dual certificate does, and use potential's boundedness to obtain a needed guarantee on the certificate.}  
While our approach is in the spirit of regret analysis in online learning, it introduces several new ideas to deal with the NSW objective using predictions, which may prove of use in other settings.

\medskip
\noindent\textbf{Paper Structure.}
We formalize our setting in \cref{sec:setting}, and in \cref{sec:challenges} provide some intuition regarding the algorithmic challenges that arise when introducing fairness into online settings. 
%In particular, this section presents several examples of natural algorithms that fail to achieve a good competitive ratio for the NSW, as well as a hardness example for the maximin objective. 
\cref{sec:positiveresults} contains the main positive result of the paper: the \algoname\ algorithm, that achieves a competitive ratio of $O(\log N)$ and $O(\log T)$ for online NSW maximization. 
\cref{sec:lowerbnds} complements this with hardness results, showing that even with perfect knowledge of all agents' monopolist values, no online algorithm can achieve a competitive ratio of better than $O(\sqrt{N})$ for MW (\cref{ssec:lowerbndminmax}), and $O(\log^{1-\epsilon} N)$ or $O(\log^{1-\epsilon} T)$ for any $\epsilon > 0$ for NSW (\cref{sec:hardnessNSW}); the former highlights the lack of robustness of MW, while the latter matches our positive result for NSW.

\subsection{Related Work}
\label{ssec:relatedwork}

%The setting of the paper is quite similar to ours, however applying the results to our problems does not yield meaningful guarantees. In particular, maxmin welfare objective cannot be encoded in the form above. , so that maximizing $F(x)$ yields to the optimal NSW solution. However, the multiplicative approximation guarantee on $F(x)$ (which is the main result of this work) does not imply any approximation guarantee for the Nash Welfare itself, or the resulting utilities of agents, in contrast with results we prove here. 
%As for Nash social welfare, one can pick $F(x) = \sum_i \log (\sum_t x_{it}v_{it})$ (and re-scale valuations to avoid negative values of $F(x)$), so that maximizing $F(x)$ yields to the optimal NSW solution. However, the multiplicative approximation guarantee on $F(x)$ (which is the main result of this work) does not imply any approximation guarantee for the Nash Welfare itself, or the resulting utilities of agents, in contrast with results we prove here. 

%In our lower bounds we adopt the adaptive adversary model (which is the strongest model \citep{ZP19} consider among others).

The online algorithm that we propose in this paper is a contribution to the exciting emerging literature that moves beyond worst-case analysis by focusing on \emph{algorithms with predictions} or \emph{learning-augmented algorithms} (e.g., see the book chapter by \citet{AlgorithmsWithPredictions}). In contrast to the overly pessimistic model of worst-case analysis, this line of work instead assumes that algorithms are enhanced with some exogenously provided prediction regarding some of the relevant parameters of the problem. The quality of the algorithm is then evaluated based on the quality of the prediction, measuring how the competitive ratio deteriorates as a function of the error in the prediction. Our work contributes to this literature by proposing a natural type of parameter to predict in multiagent resource allocation problems, as well as a novel way to leverage this prediction in an algorithm to achieve an exponential improvement in the competitive ratio.

Our work builds on a long literature on using primal-dual style algorithms for online resource allocation. One relevant example is the work of \citet{DJ2012}, which considers allocating items to agents in an online setting to maximize a generic objective $F(x) = \sum_i M(\sum_t x_{it}v_{it})$, where $M$ is a non-decreasing concave function. While the offline Nash social welfare maximizer can be computed by picking $F(x) = \sum_i \log (\sum_t x_{it}v_{it})$  (the so-called Eisenberg-Gale program), the multiplicative approximation guarantees achieved via this approach do not translate to meaningful guarantees with respect to the geometric mean. 
%An important distinction though is that in our setting, the sum of each agent's values over all rounds is assumed to be normalized, which provides an interesting indirect restriction on the adversary. 
This is closer in spirit to the work of~\citet{azar2010allocate}, who consider the same problem, but with the objective of balancing the fraction of their total utility that each agent gets in hindsight. Compared to our bounds, the guarantees they obtain  depend on the conditioning of the valuations (specifically, on the ratio of the maximum value of an agent to the minimum nonzero value). Hence, their bound on the competitive ratio can be arbitrarily large as this ratio goes to infinity. These works emphasize how critical our use of predictions %promised utilities 
is in getting strong guarantees for online NSW maximization.

% The main obstacles that we face in this paper are more closely related to the obstacles that commonly appear in the growing literature on online fair division (e.g., \cite{GPT21,BKPP18,ZP20,BMS19,HPPZ2019,sinclair2020sequential,FPV17,FPS15,LLY18}). Specifically, defining and achieving fairness often requires a more holistic view of the instance at hand, which is exactly what online algorithms lack. For example, a common barrier in this work is the algorithm's inability to distinguish between agents that will be easy to satisfy later on, and those that will be hard to satisfy. Due to space limitations, we defer an extended discussion regarding this work to Section~\ref{sec:additional_related}.

The main obstacles that we face in this paper are more closely related to the obstacles that commonly appear in the growing literature on online fair division. Specifically, defining and achieving fairness often requires a more holistic view of the instance at hand, which is exactly what online algorithms lack. For example, a common barrier in this work is the algorithm's inability to distinguish between agents that will be easy to satisfy later on, and those that will be hard to satisfy. 
\citet{GPT21} study the extent to which online algorithms can maximize the utilitarian social welfare, while satisfying envy-freeness. They make the simplifying assumption that the sum of every agent's valuations are normalized to 1, i.e., that $V_i=1$ for all $i$. This is a significantly more stringent assumption than the one that we make in this paper, where $V_i$ can take any value and we require only a rough estimate to achieve our bounds. However, even with this assumption, they show that the best possible approximation for instances with $n$ agents is $O(\sqrt{n})$, so they focus on designing mechanisms for instances with just two agents. 

%The same issues arise even for very different notions of fairness than the ones considered in this paper. 
\citet{BKPP18} consider a setting similar to ours, but the items being allocated in their case are indivisible, i.e., can be allocated only to a single agent (so they use randomness). Their goal is to minimize the amount of envy among the agents, and they show that the best way to minimize the expected maximum envy (up to sub-logarithmic factors) is to totally disregard the agent valuations and allocate each item uniformly at random among the agents, leading to a bound of $\tilde O(\sqrt{T/N})$. 
% \citet{BKPP18} consider a setting where a set of items arrive in a sequence of $T$ rounds, and an algorithm needs to allocate them in an online manner. The main differences to our work is that they focus on items that are indivisible, i.e., can be allocated only to a single agent, and that they use the amount of envy among the agents as a measure of fairness. 
%\citet{BKPP18} show that allocating each item uniformly at random among the agents, without taking their values into consideration, leads to an expected maximum envy among the agents of $\tilde O(\sqrt{T/N})$ and they argue that this bound it tight up to polylogarithmic factors. They then also propose a more elaborate deterministic algorithm that depends on the agent values and achieves the same bound. 
%
\citet{ZP20} revisit this setting and study the extent to which approximate envy-freeness can be combined with approximate Pareto efficiency. They consider a spectrum of increasingly powerful adversary models and they show that even for a non-adaptive adversary (which is weaker than the adaptive adversary model we consider in this work) there is no algorithm that can guarantee the aforementioned approximate envy while Pareto dominating the random allocation algorithm. They define an outcome to be $\alpha$ Pareto-efficient if improving every agent's utility by a factor $\alpha$ is infeasible, and they show that it is impossible to combine the envy bound with a $1/N$ approximation of Pareto efficiency, which is a trivial approximation that the random allocation algorithm satisfies. Note that our competitive ratio guarantees in this paper directly translate into approximate Pareto efficiency bounds. Specifically a competitive ratio of $\alpha$ for the the Nash social welfare directly implies an $\alpha$ approximation of Pareto efficiency.
%
%One way to interpret this is that sublinear envy is a very strong fairness constraint, and it does not allow for non-trivial Pareto-optimality guarantees (as $1/n$ factor can be achieved by just splitting each item equally across the agents). 

%Although some of the obstacles that we face in this paper are similar, in essence, to the ones that this prior work aims to address, our results are not directly comparable. In particular, we evaluate the quality of the outcomes using aggregate measures of fairness instead of pairwise fairness, so our bounds do not directly translate into interesting guarantees for envy-freeness, or vice-versa. On the other hand, our competitive ratio guarantees do directly translate into approximate Pareto efficiency bounds. In particular a competitive ratio of $\alpha$ for the maxmin welfare or the Nash social welfare directly implies an $\alpha$ approximation of Pareto efficiency. 

\citet{BMS19} also consider the sequential allocation of divisible resources, and they impose a normalization on the agent valuations similar to the one in \cite{GPT21}. The main differences are that they enforce envy-freeness as a constraint, and instead aim to maximize social welfare, and, more importantly, that they assume the valuation vectors of the agents are drawn from a distribution. Using this assumption, they weaken the fairness envy-freeness constraint to be satisfied only in expectation.
\citet{HPPZ2019} also study a similar setting, but allow the reallocation of some of the previously allocated items, and show that envy-freeness up to one item can be achieved using $O(T)$ re-allocations. Other works consider the question of online fair division with random agent valuations, e.g. \citet{sinclair2020sequential} look at envy-freeness and efficiency with respect to the hindsight optimal solution, while
%\citet{BMS19} consider objectives of fair-share and welfare, 
%\citet{ZP19} concentrate on envy-freeness and Pareto-optimality, and 
\citet{KPW16} study the maxmin share in such a setting. Another line of work on dynamic fair division, which is not very closely related to our work, considers settings where it is the agents, instead of the items, that arrive (and possibly depart) online, e.g., \cite{FPV17,FPS15,LLY18}.

Finally, a line of recent work has focused on approximating the Nash social welfare objective in offline settings. It has played a central role in the literature on the fair allocation of indivisible items, which has focused on the computational complexity of optimizing this objective. This problem is APX-hard \citep{L17}, but recent work has led to a sequence of non-trivial approximation algorithms~(e.g., \cite{CG15,CDGJMVY17,AGSS17,AMGV18,GHM18,BKV18,GKK20}). Also, a very influential paper by \citet{CKMPS19} showed that maximizing the NSW objective when allocating indivisible items leads to approximate envy-freeness, which provided additional motivation for studying this objective. 
Beyond the literature on indivisible items, approximations of this objective have also been studied in strategic settings, where the values of the agents are private information. In this case, research has focused on the design of truthful mechanisms that approximate the NSW \citep{CGG13} or the analysis of non-truthful mechanisms with respect to their price of anarchy with respect to the NSW~\citep{BGM17}. In contrast to this work, in this paper we assume that the agent valuations are public and focus on the complications introduced by the online nature of the problem.

\section{Setting}
\label{sec:setting}

%A set $N$ of $n$ agents compete over $T$ rounds, with a single divisible item arriving in each round. Each agent $i$ has per-unit value $v_{i,t}\geq 0$ for the item in round $t$, which becomes known only at the start of a round. We denote $x_{i,t}\geq 0$ as the fraction of the item allocated to the agent $i$ on round $t$,  with $\sum_i x_{i,t}=1$. Under allocation $\bx=\{x_{i,t}\}_{i\in N,t\in [T]}$, the total utility of agent $i$ at the end of $T$ periods is given by $u_i(\bx) =  \sum_{t=1}^T v_{i,t} x_{i,t}$.

A set of $N$ agents compete for items over $T$ rounds, with a single divisible item arriving in each round. Each agent $i$ has per-unit value $v_{i,t}\geq 0$ for the item in round $t$. Let $V_i = \sum_{t=1}^T v_{i,t}$ denote the sum of agent $i$'s values over all the rounds.
We denote by $x_{i,t}\geq 0$ as the fraction of the item allocated to the agent $i$ on round $t$,  with $\sum_i x_{i,t}=1$ for each round $t$. Under the overall allocation $\bx=\{x_{i,t}\}_{i\in [N], \,t\in [T]}$, the total utility of agent $i$ is given by $u_i(\bx) =  \sum_{t=1}^T v_{i,t} x_{i,t}$. While the valuations $v_{i,t}$ can be arbitrary, we assume that for each agent, we are given a \emph{prediction} $\wtd{V}_i$ of $V_i$. For each $i$, we have $\wtd{V_i}\in \left[\frac{1}{d_i} V_i, ~ c_i V_i\right]$, where $c_i\geq 1$ and $d_i\geq 1$ denote the multiplicative factors by which the prediction $\wtd{V}_i$ may overestimate or underestimate, respectively, the value of $V_i$.
%\begin{align*}
%    \frac{1}{d_i} V_i  \leq \wtd{V_i} \leq c_i V_i,
%\end{align*}
%\begin{align*}
%    \frac{1}{c_i} \wtd{V_i} \leq V_i \leq d_i\wtd{V_i},
%\end{align*}
%where $c_i, d_i \geq 1$ correspond to the multiplicative overestimating or underestimating error, respectively. 
% For a detailed discussion regarding the motivation for and the necessity of the assumption, see~\cref{sec:conclusion}.

%\vgcomment{I still think that we are emphasizing this assumption a bit too much here. It could just be inline. We even have a long discussion regarding this assumption in the final section, so I'm sure the reviewers are not going to miss it}

%\begin{assumption}[Normalized valuations]
%\label{assmp:normal}
%For every agent $i\in[N]$, their values satisfy $\sum_{t=1}^T v_{i,t} = \totalvalue$.
%\end{assumption}
% For maximizing Nash social welfare, normalized values can be assumed without loss of generality in \emph{static} settings; in an online problem, however, it becomes a meaningful assumption about the information available to the principal. We discuss the necessity of this assumption for non-trivial results and its practical justifications in Section \ref{sec:conclusion}.

In each round $t$, an online algorithm is a mapping from history of previous rounds, the agents' values on the current round $\{v_{i,t}\}_{i\in [N]}$, and predicted total values $\wtd{V}_i$ to an allocation $x_t=\{x_{i,t}\}_{i\in [N]}$ to be made on this round. Since we assume the items are divisible and the utilities of agents are linear, we can without loss of generality focus on deterministic algorithms.

We evaluate the quality of the final allocation $\bx$ of an online algorithm using two widely studied objectives: the maxmin welfare and the Nash social welfare. 
Under allocation $\bx$, the maxmin welfare (MW) corresponds to the minimum utility across all agents under $\bx$, while the Nash social welfare (NSW) is defined as the geometric mean of the agents' utilities. Formally, we have: 
\begin{align*}
    \text{MW}(\bx)=\min_i\{u_i(\bx)\} \qquad,\qquad \text{NSW}(\bx) = \left ( \prod_i u_i(\bx)\right )^{1/N}
\end{align*}
%This objective captures the intuition that a fair allocation may be one that makes the least happy agent as happy as possible.
%The second objective we consider is Nash social welfare (NSW), which is defined as the geometric mean of the agents' utilities
%Similarly, 
%$$
%NSW(\bx) = \left ( \prod_i u_i(\bx)\right )^{1/n}.
%$$
%Similarly to the egalitarian social welfare, maximizing the Nash social welfare objective is well-aligned with the goal of ensuring a fair allocation. But, rather than looking to maximize the minimum utility irrespective of how much this costs to the other agents, the Nash social welfare does so only if the relative gains in utility outweigh the losses (for more information see \cite{CKMPS19}).

%\sbedit{We valuations $\bv = \{v_{i,t}\}_{i\in [N],t\in [T]}$ are unknown to the algorithm beforehand. 
%At the start of each round $t$, the algorithm receives values $\{v_{i,t}>0\}$.
%It then must choose an allocation $\{x_{i,t}\}_{i\in [N]}$ as a function of the entire history $\{v_{i,t'},x_{i,t'}\}_{t'<t,i\in [N]}$ of earlier rounds, and the agents' current values $\{v_{i,t}\}_{i\in [N]}$. 
%}

%\sbcomment{SB add pointer to appendix}

The performance of our algorithms is measured in terms of their \emph{competitive ratio} with respect to each of these objectives. 
Let $\hat \bx(\bv)$ denote the allocation that the algorithm outputs on an instance $\bv := \{v_{i,t}\}_{i \in [N], \, t \in [T]}$, and let $\bx^{\text{MW}}(\bv)$ and $\bx^{\text{NSW}}(\bv)$ denote the optimal allocation for instance $\bv$ with respect to the maxmin and Nash social welfare objectives respectively. Then the competitive ratio of this algorithm with respect to the two objectives is defined as
\begin{align*}
\gamma^{\text{MW}} = \max_{\bv}\frac{\text{MW}(x^{\text{MW}}(\bv))}{\text{MW}(\hat x(\bv))} \qquad,\qquad 
\gamma^{\text{NSW}}= \max_{\bv}\frac{\text{NSW}(x^{\text{NSW}}(\bv))}{\text{NSW}(\hat x(\bv))}.
\end{align*}

\section{Warm-Up: Naive Attempts at Online NSW Maximization}
\label{sec:challenges}

Before presenting our \algoname\ algorithm and the guarantees for its competitive ratio, we briefly provide some insights regarding the difficulties that arise when trying to maximize NSW in an online setting. We do so by exhibiting some examples of natural, simple algorithms that fail to achieve better than a polynomial competitive ratio for NSW.
%, and then we show that no algorithm can get better than a $O(\sqrt{N})$ competitive ratio for maxmin welfare. 
For simplicity, the examples in this section assume access to perfect predictions, and that $V_i = 1$ for all $i$.
%The hardness example for maxmin is also instructive in understanding our hardness result for NSW in~\cref{sec:hardnessNSW}, which although based on a similar tension, is much more involved.

%\subsection{How Simple Algorithms Fail in Online NSW Maximization}

Arguably the simplest algorithm for allocating items in a balanced fashion is the \emph{uniform allocation}, which sets $x_{i,t}=1/N$ for all $i\in[N]$ and $t\in[T]$; this results in final utilities $u_i(\bx)=V_i/N$. Note that this policy does need access to predicted monopolist utilities $\wtd{V}_i$. However, the resulting competitive ratio is $\Omega(N)$: for example, if $T = N$ and $v_{i,t}=\mathds{1}_{\{i=t\}}\,\forall\,i,t$, then the optimal algorithm gives each agent a final utility of 1. 
%The optimal NSW for this instance would be equal to $\totalvalue$ by, of course, allocating the item of each round $t$ to the agent $i=t$, but the NSW of the uniform allocation is equal to $\totalvalue/N$ for every instance, leading to a competitive ratio of $N$ (and a competitive ratio of $T$ if we use the number of rounds to parametrize the bound).

An alternative to uniform allocation is \emph{proportional allocation}, which in a setting where all agents have equal predicted monopolist utilities (i.e., $\wtd{V}_i=1$) sets $x_{i,t}={v_{i,t}}/{\sum_j v_{j,t}}$ for all $i\in[N]$ and $t\in[T]$, resulting in utility profile $u_i(\bx) = \sum_{t\in[T]}({v_{i,t}^2}/{\sum_{j}v_{j,t}})$. Assuming perfect predictions and symmetric agents (i.e.,  $\wtd{V}_i=V_i=1\,\forall\,i$), this enjoys the following guarantee:
\begin{proposition}[Proved in~\cref{sec:proportional_sharing}]
\label{prop:propalloc}
If $\wtd{V}_i=V_i=1$ for all $i$, the utility profile under proportional allocation (weakly) Pareto dominates the utilities under uniform allocation. 
\end{proposition}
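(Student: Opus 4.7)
The plan is to show that for every agent $i$, the utility under proportional allocation is at least $1/N$, which equals the utility agent $i$ receives under uniform allocation (since $V_i = 1$). That is, I want to establish
$$u_i^{\mathrm{prop}} \;=\; \sum_{t=1}^T \frac{v_{i,t}^2}{\sum_j v_{j,t}} \;\geq\; \frac{1}{N} \;=\; u_i^{\mathrm{unif}}$$
for each agent $i$. This single inequality immediately gives weak Pareto dominance.

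The key step is a direct application of the Cauchy–Schwarz inequality. First I would write
$$V_i \;=\; \sum_t v_{i,t} \;=\; \sum_t \frac{v_{i,t}}{\sqrt{\sum_j v_{j,t}}} \cdot \sqrt{\sum_j v_{j,t}},$$
treating the two factors as inner-product entries (when $\sum_j v_{j,t}=0$, the corresponding term is simply omitted since $v_{i,t}=0$ as well). Applying Cauchy–Schwarz to this inner product yields
$$V_i^2 \;\leq\; \left(\sum_t \frac{v_{i,t}^2}{\sum_j v_{j,t}}\right)\left(\sum_t \sum_j v_{j,t}\right) \;=\; u_i^{\mathrm{prop}} \cdot \sum_{j=1}^N V_j.$$

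Then I would plug in the normalization $V_i = 1$ for every $i$, so $\sum_j V_j = N$, giving $1 \leq N \cdot u_i^{\mathrm{prop}}$ and hence $u_i^{\mathrm{prop}} \geq 1/N$. Since this holds for each agent $i$, and uniform allocation gives each agent exactly $1/N$, the proportional allocation Pareto dominates it.

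There is no real obstacle here: the only subtlety is handling rounds in which $\sum_j v_{j,t}=0$ (where the proportional rule is undefined), but this is harmless since such rounds contribute nothing to any agent's utility under either rule and can be dropped without affecting the inequality.
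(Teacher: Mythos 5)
Your proof is correct, but it takes a genuinely different route from the paper's. The paper proves the (more general) statement $u_i(\tilde x)\geq V_i/N$ for the normalized proportional rule by fixing agent $i$'s values, collapsing the other agents into a single ``super-agent'' reporting $v_t'=\sum_{j\neq i}v_{j,t}/V_j$ with $\sum_t v_t'\leq N-1$, and then explicitly minimizing agent $i$'s utility over $v_t'$ via a convex program and its KKT conditions; the minimum turns out to be exactly $V_i/N$, attained at $v_t'=\frac{v_{i,t}}{V_i}(N-1)$. Your argument replaces that entire optimization with a one-line Cauchy--Schwarz bound, writing $V_i=\sum_t \frac{v_{i,t}}{\sqrt{\sum_j v_{j,t}}}\cdot\sqrt{\sum_j v_{j,t}}$ and squaring; with the normalization $\sum_j V_j=N$ this immediately yields $u_i^{\mathrm{prop}}\geq 1/N$. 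Your handling of rounds with $\sum_j v_{j,t}=0$ is also fine, since nonnegativity forces $v_{i,t}=0$ there. What each approach buys: yours is shorter, more elementary, and (if you insert the weights $1/V_j$ into the denominator and run the identical computation) extends verbatim to the normalized rule with arbitrary $V_i$, matching the paper's general theorem; the paper's Lagrangian approach additionally exhibits the exact worst-case adversarial value profile, showing the bound is tight --- though the equality condition of Cauchy--Schwarz ($v_{i,t}$ proportional to $\sum_j v_{j,t}$) recovers the same tightness information with no extra work.
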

More generally, in \cref{sec:proportional_sharing}, we show that given (perfect) predictions $\wtd{V}_i=V_i$ for all $i$, the normalized proportional allocation rule $x_{i,t}=\frac{{v_{i,t}/\wtd{V}_i}}{{\sum_j v_{j,t}/\wtd{V}_j}}$ Pareto dominates the utilities under uniform allocation. 
Nevertheless, even when all $V_i=1$ and with perfect predictions $\wtd{V}_i=V_i=1$ for all $i$, proportional allocation fails to get a competitive ratio better than $O(\sqrt{N})$:
\begin{example}[Proportional allocation fails to achieve high NSW]
Consider an instance with $T=N$ rounds and $V_i=1$ for all $i$, where in each round $t$, the `corresponding' agent $i=t$ has a value of $v_{i,t}=\totalvalue/\sqrt{N}$, and every other agent has a value of $v_{i,t}=(\totalvalue-\totalvalue/\sqrt{N})/(N-1)$.
Since $\sum_iv_{i,t}=1$ for all $t$,  under proportional allocation agent $i$ gets $u_i(\bx)=\sum_tv_{i,t}^2 = \frac{1}{N}\left(1+\frac{\sqrt{N}-1}{\sqrt{N}+1}\right)\leq \frac{2}{N}$; this is also the Nash social welfare since all the agents receive the same utility. On the other hand, setting $x_{i,t}=\mathds{1}_{\{i=t\}}$ results in $u_i(\bx)=1/\sqrt{N}$ (and hence also the NSW), which gives a competitive ratio of $\sqrt{N}/2$.
\end{example}

Finally, another natural approach is an online greedy algorithm, which for each round $t$, chooses allocations $\{x_{i,t}\}_{i\in[N]}$ to maximize the Nash social welfare at the end of that round. Surprisingly, this turns out to perform as poorly as uniform allocation.
\begin{proposition}[Proved in~\cref{sec:puregreedy}]\label{prop:myopic_greedy}
Myopic greedy has $\Omega(N)$ competitive ratio.
\end{proposition}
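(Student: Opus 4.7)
The plan is to exhibit a family of instances, parameterized by $N$, on which myopic greedy's NSW is an $\Omega(N)$ factor worse than OPT's NSW. The construction exploits the cold-start issue of myopic greedy: when all accrued utilities $u_i$ are zero, maximizing $\prod_i x_i v_{i,1}$ (equivalently $\sum_i \log x_i$ plus a constant) via the first-order conditions forces uniform allocation $x_{i,1} = 1/N$, regardless of the values $v_{i,1}$. Consequently, greedy's round-$1$ allocation is blind to even highly skewed valuations and can ``waste'' large portions of the first item on agents who have little use for it.

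The proof proceeds in three steps. First, I describe a specific instance with $N$ agents, $T = N$ rounds, and $V_i = 1$ for every $i$, in which in round $t$ the ``preferred'' agent $t$ has value close to $1$ while the remaining agents have negligible value $\epsilon$. Second, I compute myopic greedy's allocation inductively round by round using the first-order (KKT) conditions, with the goal of showing that greedy's final utility profile $(u_1, \ldots, u_N)$ has geometric mean $O(1/N)$. Third, I exhibit the allocation that gives round $t$ entirely to agent $t$, achieving $u_i \approx 1$ for every $i$ and hence $\text{NSW}(\bx^{\text{NSW}}) = \Omega(1)$. The $\Omega(N)$ competitive ratio then follows from the resulting ratio $\text{NSW}(\bx^{\text{NSW}})/\text{NSW}(\hat\bx)$.

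The main obstacle is pinning down just how bad myopic greedy's final utility profile really is. In the simplest ``diagonal'' valuation scheme, greedy's KKT recursion actually recovers well enough to yield $u_i \approx i/N$ and $\text{NSW}_{\text{greedy}} \approx 1/e$, giving only a constant competitive ratio. To push the gap to $\Omega(N)$, the construction will need either to exploit tie-breaking freedom (when $v_{i,t} = 0$ for some agent the single-round NSW is identically zero, so myopic greedy's choice is unconstrained and adversarial tie-breaking can trap it into allocating repeatedly to already-satisfied agents) or to use more delicate valuations that keep $u_i/v_{i,t}$ close to equal across $i$ at each round, so that greedy remains essentially uniform throughout. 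The technical heart of the proof is verifying that such an instance simultaneously satisfies $V_i = 1$ and forces $\prod_i u_i^{\text{greedy}}$ to be as small as $(1/N)^N$.
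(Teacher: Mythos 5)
Your high-level diagnosis is right---myopic greedy fails because it over-prioritizes agents with low accrued utility even when their current values are negligible---but the proposal stops exactly where the proof has to begin. You correctly compute that the natural diagonal instance ($T=N$, agent $t$ preferred in round $t$) loses only a constant factor, and you then leave the working construction as an open step, offering two candidate fixes. Neither leads to the needed bound. Adversarial tie-breaking on zero values is a degenerate-case trick the paper does not rely on (all values in its instance are strictly positive), and the idea of keeping greedy ``essentially uniform throughout'' cannot beat the $1/e$ loss you already observed: if every item is split roughly evenly, each agent still collects a $\Theta(1/N)$ share of the $\Theta(1)$ of value concentrated on her, and the geometric mean of a $(1/N,2/N,\dots,N/N)$-type profile is a constant. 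So the technical heart you flag---forcing $\prod_i u_i^{\mathrm{greedy}}$ down to $(1/N)^N$---is genuinely missing.

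The paper's instance works by a different mechanism: it makes greedy heavily \emph{non}-uniform in the wrong direction. It takes $T=N^2$ rounds; agent $i$ is ``active'' during rounds $(i-1)N+1,\dots,iN$ with value roughly $1/N$ in each (so she needs essentially all $N$ of her items to reach $V_i=1$), while every non-active agent $j$ has a tiny but \emph{geometrically growing} value $v_{j,t}=1/N^{N^2-t+1}$. Because these tiny values grow by a factor of $N$ each round, a non-active agent's ratio $v_{j,t}/u_{j,t}$ remains $\Omega(N)$ no matter how the earlier tiny items were divided, whereas the active agent's ratio drops to $O(1)$ as soon as she receives her first item (worth only $1/N$). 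By the KKT characterization of the greedy step (the analogue of Lemma~\ref{lem:endowed_greedy_dual} for myopic greedy), greedy then hands every remaining item of the block to the non-active decoys, capping the active agent at $u_i\le 1/N+o(1/N)$, while the offline optimum gives each agent her whole block and achieves $u_i=1-o(1)$; this yields the $\Omega(N)$ gap. The missing ingredient in your plan is precisely this device: spread each agent's value over $\Theta(N)$ rounds so that one greedy ``win'' is worth only $1/N$, and pair it with decoy valuations whose value-to-utility ratios are kept artificially high forever.
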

The intuition behind this is that by giving higher priority to agents whose current utility is low, myopic greedy can end up allocating items to agents with very small values in the current round, while ignoring agents with substantially larger values.

%\subsection{Lower bound for Nash Social Welfare}
%\label{sec:hardnessNSW}
%\input{lowerboundproof.tex}

\section{\algoname\ Algorithm for Online NSW Maximization}
\label{sec:positiveresults}

We now present our main positive result: the \algoname\ algorithm. The competitive ratio of the algorithm is given in \Cref{thm:main_result}.
We begin with an informal description of the algorithm. \algoname\ divides every item in half, and uses a different strategy for allocating each half:
\begin{itemize}[nosep,leftmargin=*]
\item The \emph{set-aside half} is distributed uniformly among the agents, so in each round every agent receives a $\frac{1}{2N}$ fraction of the whole item -- we will also refer to this as the \emph{promised share} of each agent. 
This ensures that at completion, the utility of agent $i$ is at least $\frac{V_i}{2N}$. 
\item The \emph{greedy half} is allocated so as to maximize the NSW at the end of each round, while incorporating a prediction of the utility that each agent will receive from their promised share. Since the algorithm has access to a prediction $\wtd{V}_i$ of the monopolist utility $V_i$ for each agent $i$, a natural proxy is to use $\frac{\wtd{V}_i}{2N}$ as the prediction of the utility that agent $i$ will receive from the promised share.
%The \emph{greedy half} is allocated so as to maximize the NSW, with the foresight that each agent has \emph{already} received a utility of $\totalvalue/(2N)$ in advance, and this informs the greedy algorithm's allocation. 
\end{itemize}
Note that the allocation rule for the greedy half here is different from the myopic greedy algorithm discussed in~\cref{sec:challenges}; in myopic greedy, the allocation in round $t$ is made so as to maximize the NSW at the end of the round, while in \algoname\ the greedy rule incorporates the prediction that each agent will also receive a $\frac{\wtd{V}_i}{2N}$ utility from the set-aside half. This  circumvents the issues suffered by myopic greedy discussed in~\cref{sec:challenges}.
Also, note that \algoname\ does not need to know the value of $T$.

To define the algorithm formally, we first introduce some additional notation. For each $i \in [N]$ and $t \in [T]$, let $y_{i,t}$ and $z_{i,t}$
be the semi-allocations to agent $i$ in round $t$ from the first and second halves of the item, respectively, so $x_{i,t} = y_{i,t}+z_{i,t}$, and $\sum_i y_{i,t} = \sum_i z_{i,t} = 1/2$ for all $t \in [T]$. 
\begin{definition}[Predicted Utility under Promised Share]
\label{def:promised_utility}
Given round $t\in[T]$, current allocation $z_t=\{z_{i,t}\}_{i\in[N]}$, and historical semi-allocations $z_{t'}= \{z_{i,t'}\}_{i\in[N]}\,\forall\,t'<t$, the \emph{predicted utility under promised share} of agent $i$ is defined as:
\begin{equation*}
\tilde u_{i,t}( z_1,  z_2, \ldots,  z_{t-1}, z_t) ~=~
\frac{\wtd{V}_i}{2N}+ z_{i,t}v_{i,t} + \sum_{t'=1}^{t-1} z_{it'}v_{it'} .
%~=~ \frac{\totalvalue}{2n}+\sum_{t'=1}^{t} \left( x_{it'}-\frac{1}{2n}\right)v_{it'} ~=~ 
%\tilde u_{i,t-1}(\hat z_{t-1})+ \hat z_{i,t}v_{i,t}
\end{equation*}
\end{definition}
The name of this quantity comes from the following observation: Suppose half the item is distributed uniformly as a promised share (that is, $ y_{i,t} = \frac{1}{2N}$). Then, under any allocation rule $\{z_{i,t}\}$ for the other half, $\tilde{u}_{i,t}$ is a prediction of the final utility of agent $i$, incorporating the knowledge that they will receive a $\frac{1}{2N}$ fraction of the item in each round. 
% (i.e., $u_i( \bx) \geq \tilde u_{i,t}(z_t)\,\forall\,t$). 
% Moreover, after the final allocation is made, the two quantities coincide (i.e., $u_i(\bx) = u_{i,T}( z_T)$), thus delivering on the promise. 
For brevity, throughout the paper we abuse notation and use ``predicted utility" in place of ``predicted utility under promised share". 
Though predicted utilities depend on all the previous semi-allocations, for notational simplicity we henceforth use the shorthand $\tilde u_{i,t}(z_t)$, dropping the dependence on previous allocations.

\Cref{alg:\algoname} presents the full details of the \algoname\ algorithm. We henceforth use $\hat x, \hat z, \hat y$ exclusively to denote the allocations and semi-allocations produced by this particular algorithm. Half the item is allocated uniformly ($\hat y_{i,t}=1/2N$), and the other semi-allocation $\hat z_t$ is chosen to maximize the Nash social welfare \emph{with respect to the predicted utilities} $\tilde u_{i,t}(z_t)$. Note that the semi-allocation $\hat z_t$ is written in terms of the equivalent \emph{Eisenberg-Gale convex program}: $\max_{z_t} \sum_i \log(\tilde u_{i,t}(z_t))$. While this can be solved in polynomial time via any standard convex optimization solver, in~\cref{sec:tractability} we present a simpler and more efficient way of performing this step, which leads to an overall running time of $O(TN)$. 

\vspace{3pt}
\begin{algorithm}[H]
\caption{\algoname\ algorithm}%$(T,\q_1,\q_2)$}
{\bf Input:} A prediction $\wtd{V}_i$ of $V_i$ for all $i$.
\begin{algorithmic}[1]
\label{alg:\algoname}
%\REQUIRE Time horizon $T$, user type distributions 
\FORALL{$t=1$ to $T$}
\STATE Set-aside semi-allocation: for each agent $i$ set $\hat y_{i,t}=1/(2N)$.
\STATE Greedy semi-allocation: compute $\hat z_t = \arg\max_{z_t}\left\{\sum_i \log(\tilde u_{i,t}(z_t))\right\}$ ~subject to~ $\sum_i z_{i,t} \leq \frac 1 2,\;z_{i,t}\geq 0$. \\
\COMMENT{Note that $\tilde{u}_{i,t}$ depends on $\wtd{V}_i$.}
%$z_{t}$ by solving \eqref{eq:greedy}
\STATE Allocate $\hat x_{i,t} = \hat y_{i,t}+\hat z_{i,t}$.

\ENDFOR
\end{algorithmic}
\end{algorithm}

\vspace{3pt}

We are now ready to state the main result. Recall from \Cref{sec:setting} that $c_i, d_i \geq 1$ are defined to capture the multiplicative error in the predicted total values:   $\wtd{V}_i \in \left[\frac{1}{d_i} V_i, ~ c_i V_i\right]$,  for all $i$.
% \begin{theorem}
% \label{thm:main_result}
% For a given instance $\bv$, let $k(\bv)\triangleq T\max_{i,t}\{v_{i,t}\}$. Then the \algoname\ algorithm achieves competitive ratio 
% \[\CRnsw \leq 2\ln( \min\{k(\bv),N\})+ 2\ln(2)+1.\] 
% In particular, because $k(\bv) \leq T$, \algoname\ achieves $\CRnsw = O(\log( \min\{T,N\})$.
% \end{theorem}
\main*

% We henceforth denote $k(\bv)$ by just $k$ when the dependence on $\bv$ is clear. Note that $k(\bv)$ is the ratio between the maximum value of any agent in any round, and the average per-round value of an agent (since there are $T$ rounds and the total value of an agent is normalized to 1, the average per-round value for any agent is $\frac{1}{T}$). Since $v_{i,t}\leq 1$, we always have $k(\bv)\leq T$. Parameterizing the bound in terms of $k$ captures the intuition that when values are not much larger than average (across rounds), the performance of the online algorithm should improve. At the extreme, when values are equal across all rounds, the problem becomes trivial as the optimal solution is the uniform allocation.
 
%\vgcomment{I already stated it in the our results section but we could also include a Corollary here stating that as long as $\left(\prod_{i=1}^N c_i\right)^{\frac{1}{N}}=O(1)$ and $\frac{1}{N}\sum_{i=1}^N\log(d_i)=O(\min\{N,  T\})$, then we achieve a the bound stated in the abstract and our results section. We should probably also elaborate on that. E.g., we can have all values underestimated by $\min\{N, T\}$ factor, and some analogous statement for overestimating like the one we had converged to when we met last time (was it overestimating a logarithmic fraction of bidders by a linear amount?).}
%\bjcomment{Since we already stated it in the results section, maybe it's fine to omit it here? I feel like the statement of the theorem speaks for itself. }
 
The rest of the section is devoted to proving the result above. We start by describing a duality-based approach for constructing an upper bound on the competitive ratio of a given allocation. We then leverage this technique to prove the theorem. 
 
\subsection{A Duality-Based Upper Bound for the Competitive Ratio}
\label{sec:dual}

In proving the main result of this section, we first construct a dual certificate that tracks the quality of the NSW of the allocation over the execution of the algorithm. As a consequence of this certificate, we get that the \algoname\ algorithm can be alternately interpreted as greedily minimizing the sum of \emph{predicted prices}, which serve as lower bounds on the dual solution.
Informally, the following lemma states that for any allocation, one can construct a vector of ``prices'', one for each round, such that the average price bounds the competitive ratio of the allocation.

\begin{lemma}
\label{lem:prices}
We are given values $\bv$, NSW maximizing allocation $\bx^*$, and allocation $\tilde \bx$. Then for any `price' vector $\{p_t\}_{t\in[T]}\in\mathbb{R}_+^T$ such that $p_t\geq \frac{v_{i,t}}{c_iu_i(\tilde \bx)}$ for every agent $i\in[N]$ and $t\in[T]$, we have
\begin{equation*}
\frac{\mathrm{NSW}(\bx^*)}{\mathrm{NSW}(\tilde \bx)} ~\leq~ \left(\prod_{i=1}^n c_i\right)^{\frac{1}{N}}
\frac{\sum_{t=1}^T p_t}{N}.
\end{equation*}
\end{lemma}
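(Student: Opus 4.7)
The plan is to use the pricing inequality $p_t \geq \frac{v_{i,t}}{c_i u_i(\tilde{\bx})}$ directly to linearize each agent's utility under $\bx^*$, then combine across agents via AM--GM and use feasibility on each round.

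First I would rewrite the hypothesis as $v_{i,t} \leq c_i \, u_i(\tilde{\bx}) \, p_t$ for every $i,t$. Plugging this into the definition $u_i(\bx^*) = \sum_t v_{i,t} x^*_{i,t}$ yields the per-agent bound
\[
u_i(\bx^*) \;\leq\; c_i \, u_i(\tilde{\bx}) \sum_{t=1}^T p_t \, x^*_{i,t},
\]
which rearranges to $\frac{u_i(\bx^*)}{u_i(\tilde{\bx})} \leq c_i \sum_t p_t x^*_{i,t}$. This step is the heart of the argument and is essentially weak LP-duality for the Eisenberg--Gale program, reinterpreted so that the dual object (prices) need only dominate the \emph{$\tilde{\bx}$-scaled} marginals rather than the true optimal marginals.

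Next I would take the geometric mean of this inequality over $i \in [N]$:
\[
\frac{\mathrm{NSW}(\bx^*)}{\mathrm{NSW}(\tilde{\bx})} \;\leq\; \Bigl(\prod_{i=1}^N c_i\Bigr)^{1/N} \Bigl(\prod_{i=1}^N \sum_{t=1}^T p_t \, x^*_{i,t}\Bigr)^{1/N}.
\]
Then I would apply AM--GM to the second factor:
\[
\Bigl(\prod_{i=1}^N \sum_{t=1}^T p_t \, x^*_{i,t}\Bigr)^{1/N} \;\leq\; \frac{1}{N}\sum_{i=1}^N \sum_{t=1}^T p_t \, x^*_{i,t} \;=\; \frac{1}{N}\sum_{t=1}^T p_t \sum_{i=1}^N x^*_{i,t}.
\]
Finally, since $\bx^*$ is a feasible allocation, $\sum_i x^*_{i,t} \leq 1$ for every round $t$, so the inner sum collapses and the right-hand side is at most $\frac{1}{N}\sum_{t=1}^T p_t$, which completes the proof.

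I don't expect any real obstacle here; the only subtlety is deciding at what granularity to invoke the hypothesis. Using it at the item-level (rather than summing first) is what makes the subsequent AM--GM step go through cleanly, and the bound $\sum_i x^*_{i,t} \leq 1$ is exactly what is needed to reduce the two-index sum to $\sum_t p_t$.
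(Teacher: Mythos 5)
Your proof is correct and is essentially the paper's argument: both rest on the per-item bound $v_{i,t}\le c_i u_i(\tilde\bx)p_t$, the feasibility constraint $\sum_i x^*_{i,t}\le 1$, and a single application of AM--GM. The only difference is presentational --- the paper packages the aggregation step as weak LP duality between a pricing LP and its dual and applies AM--GM to the ratios $u_i(\bx^*)/(c_i u_i(\tilde\bx))$ after summing over agents, whereas you carry out the same computation directly and apply AM--GM to the per-agent quantities $\sum_t p_t x^*_{i,t}$ before collapsing the sum.
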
 
\begin{proof}
Consider the value profile $\bv$ and the allocation $\tilde \bx$, and let $\pv$ be a vector of prices that satisfies the constraints in the statement of the lemma. We can now write $\pv$ as a solution to a linear program $(P)$, with corresponding dual program $(D)$:
\begin{equation*}
\begin{aligned}[c]
\label{eq:price_lp}
(P) \qquad&\min_{p\in \mathbb{R}_+^T}\ \sum_{t=1}^T p_t \\
&\text{s.t.}\ p_t\geq \frac{v_{i,t}}{c_iu_i(\tilde \bx)}\quad\; \forall\,i\in[N],t\in[T]\\
%& \quad\;\; p_t \geq 0\qquad\quad\; \forall\,t\in[T]
\end{aligned}
\qquad \qquad
\begin{aligned}[c]
(D)\qquad 
&\max_{\bx\in\mathbb{R}_+^{NT}}\ \sum_{i=1}^N \frac{u_i( \bx)}{c_iu_i(\tilde \bx)} \\
&\text{s.t.}\ \sum_{i=1}^N x_{i,t} \leq 1 \quad \forall\,t\in[T]\\
%& \qquad \bx \geq 0
\end{aligned}
\end{equation*}
Here we define $u_i( \bx) = \sum_{t=1}^T  x_{i,t} v_{i,t}$. We stress that only $ \bx$ (and not $\tilde \bx$) is a variable in the dual program.
By LP duality, any feasible solution $\mathbf{p}$ for the primal program $(P)$ gives an upper bound for the value of the dual program $(D)$. Substituting $\bx=\bx^*$, the optimal NSW solution, we get
$$\frac{1}{N} \sum_{i=1}^N \frac{u_i( \bx^*)}{c_iu_i(\tilde \bx)}  \leq \frac {\sum_t p_t}{N}.$$
Finally, via the AM-GM inequality, we have 
\begin{equation*}
\frac{1}{N} \sum_{i=1}^N \frac{u_i( \bx^*)}{c_iu_i(\tilde \bx)} 
\geq \left [\prod_{i=1}^N \frac{u_i( \bx^*)}{c_iu_i(\tilde \bx)} \right ]^{1/N}  
= \left( \prod_{i=1}^n \frac{1}{c_i}\right)^{\frac{1}{N}} \frac{\text{NSW}(\bx^*)}{\text{NSW}(\tilde \bx)}.\qedhere
\end{equation*}
\end{proof}
Given this lemma, it is useful to re-interpret the \algoname\ algorithm in terms of dual prices. 
Suppose, in light of the lemma above, one wants to achieve a good competitive ratio by choosing an allocation $\tilde{\bx}$ that minimizes prices $p_t$. The problem is, the right side of the inequality $p_t\geq \frac{v_{i,t}}{c_iu_i(\tilde{\bx})}$, namely total utility $u_i(\tilde{\bx})$, depends on future values and allocations, which are unavailable to an online algorithm.

A natural fix to this problem is to construct a lower bound for $c_iu_i(\tilde{\bx})$, and a corresponding upper bound for the feasible price $p_t$. This is exactly what the previously defined predicted utility $\tilde u_{i,t}(z_{i,t})$ does. This is because  in any allocation ${\tilde{\xv}}$, as long as we commit to allocating half of the item uniformly (i.e. $y_{i,t}=\frac{1}{2N}$), we have the following  bound:
$$ 
\tilde u_{i,t}(z_{i,t})
=\frac{\wtd{V}_i}{2N}+ z_{i,t}v_{i,t} + \sum_{t'=1}^{t-1} z_{it'}v_{it'}
\leq \frac{c_iV_i}{2N} + \sum_{t=1}^Tz_{i,t}v_{i,t}
\leq c_iu_i(\tilde{\xv}).
$$
Let us call the corresponding upper bound for the price, $\tilde p_t(z_t) = \max_i \frac{v_{i,t}}{\tilde u_{i,t}( z_{i,t})}$, the \emph{predicted price}. Since it does not depend on future rounds, one can alternatively choose semi-allocations $z_t$ in order to minimize the predicted price $\tilde p_t( z_{t})$ on this round. In fact, doing so leads to exactly the same allocation as \algoname, as we show with the following lemma.
\begin{lemma}
\label{lem:endowed_greedy_dual}
For the allocation $\hat z_t$ chosen by \algoname\ in round $t$,
for any agent $i$ with allocation $\hat z_{i,t}> 0$, we have $\frac{v_{i,t}}{\tilde u_{i,t}(\hat z_{i,t})} = \max_{i'}\frac{v_{i',t}}{\tilde u_{i',t}(\hat z_{i,t})} = \tilde p_t(\hat z_{i,t})$.
\end{lemma}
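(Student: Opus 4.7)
The plan is to recognize that the lemma is essentially a statement of the KKT optimality conditions for the Eisenberg–Gale-style convex program that defines $\hat z_t$. First, I would observe that the objective $\sum_i \log(\tilde u_{i,t}(z_t))$ is concave in $z_t$ (it is a sum of concave functions, each the log of an affine, positive function of $z_t$, assuming $\wtd V_i>0$, which is WLOG since any agent with $V_i=0$ can be ignored). The feasible set is a scaled simplex, so the problem is a well-posed concave maximization with a strictly interior feasible point, hence Slater's condition holds and KKT characterizes optimality.

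Next, I would write out the Lagrangian with a multiplier $\lambda\ge 0$ for $\sum_i z_{i,t}\le 1/2$ and multipliers $\mu_i\ge 0$ for $z_{i,t}\ge 0$. The partial derivative of $\log \tilde u_{i,t}(z_t)$ with respect to $z_{i,t}$ is $v_{i,t}/\tilde u_{i,t}(z_t)$ (this is where I use the definition of $\tilde u_{i,t}$ from \cref{def:promised_utility}). Stationarity yields, for every $i$,
\[
\frac{v_{i,t}}{\tilde u_{i,t}(\hat z_{i,t})} \;+\; \mu_i \;=\; \lambda,
\]
with complementary slackness $\mu_i \hat z_{i,t}=0$. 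I would then argue that $\lambda>0$ and the aggregate constraint binds: if it did not, one could increase any $\hat z_{i,t}$ slightly, which (as long as $v_{i,t}>0$) strictly raises the objective; the case where $v_{i,t}=0$ for all $i$ is trivial because then every ratio $v_{i,t}/\tilde u_{i,t}$ equals $0$ and both sides of the desired equality vanish.

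From complementary slackness, for any agent $i$ with $\hat z_{i,t}>0$ we have $\mu_i=0$, and so $v_{i,t}/\tilde u_{i,t}(\hat z_{i,t})=\lambda$. For every other agent $i'$, stationarity combined with $\mu_{i'}\ge 0$ yields $v_{i',t}/\tilde u_{i',t}(\hat z_{i',t})\le \lambda$. Taking the maximum over $i'$ then gives $\max_{i'} v_{i',t}/\tilde u_{i',t}(\hat z_{i',t})=\lambda$, and thus this maximum coincides with $v_{i,t}/\tilde u_{i,t}(\hat z_{i,t})$ for any $i$ with $\hat z_{i,t}>0$. Recalling the definition $\tilde p_t(\hat z_t)=\max_{i'} v_{i',t}/\tilde u_{i',t}(\hat z_{i',t})$ concludes the proof.

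The only genuine obstacle is the minor edge-case bookkeeping: ensuring $\tilde u_{i,t}(\hat z_{i,t})>0$ so the KKT conditions are applicable (which follows from $\wtd V_i>0$, the only nontrivial case), and justifying that the simplex constraint binds. Both are straightforward, so the proof reduces to a clean KKT derivation.
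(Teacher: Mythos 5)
Your proof is correct, and at its mathematical core it rests on the same fact as the paper's: the KKT conditions of the Eisenberg--Gale-type program defining $\hat z_t$. The difference is one of route. The paper reduces round $t$ to an auxiliary static Fisher market with $N+1$ items (one ``private'' item per agent encoding $\tilde u_{i,t}(0)$, plus one shared item encoding the current round's good) and then invokes the known equilibrium/KKT conditions for Fisher markets as a black box. You instead write out the Lagrangian of the convex program directly and derive stationarity ($v_{i,t}/\tilde u_{i,t}(\hat z_t) = \lambda - \mu_i$) plus complementary slackness, which immediately gives that every agent with $\hat z_{i,t}>0$ attains the common value $\lambda$ and every other agent is at most $\lambda$. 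Your version is more self-contained and arguably cleaner, since it avoids introducing the fictitious private items and the change of variables $z_i = x_{i,M}/2$; the paper's version buys the ability to cite standard Fisher-market results without rederiving them, and makes the ``market'' interpretation of the predicted utilities explicit. Your handling of the edge cases (Slater's condition, $\tilde u_{i,t}(0)>0$ via $\wtd{V}_i>0$, and the all-zero-values round) is appropriate and is in fact more careful than what the paper writes down.
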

\begin{proof}[Proof of Lemma \ref{lem:endowed_greedy_dual}]

\begin{comment}
% %This is an alternative proof
% I think there is an even easier argument here. Here's a rough draft which I didn't double check: assume that for some agent $i$ we have $\hat z_{i,t}\neq 0$ but $\frac{v_{i,t}}{\tilde u_{i,t}(\hat z_{i,t})} = (1+\delta) \max_{i'}\frac{v_{i',t}}{\tilde u_{i',t}(\hat z_{i,t})}$ for some $\delta>0$. Then, take an $\epsilon<\hat z_{i,t}$ portion of the item from agent $i$ and give it to an agent $j$ who has the highest ratio. The ratio of the new NSW over the old NSW will be \[\frac{(u_{i,t}(\hat z_{i,t})-v_{i,t}\epsilon)(u_{j,t}(\hat z_{i,t})+v_{j,t}\epsilon)}{u_{i,t}(\hat z_{i,t})u_{j,t}(\hat z_{j,t})} 
% = 
% 1+\frac{(u_{i,t}(\hat z_{i,t})v_{j,t} -u_{j,t}(\hat z_{j,t})v_{i,t})\epsilon-v_{i,t}v_{j,t}\epsilon^2}{u_{i,t}(\hat z_{i,t})u_{j,t}(\hat z_{j,t})} \] 
% and using the fact that $u_{i,t}(\hat z_{i,t})v_{j,t}=(1+\delta)u_{j,t}(\hat z_{j,t})v_{i,t}$, the numerator becomes $u_{j,t}(\hat z_{j,t})v_{i,t}\delta\epsilon -v_{i,t}v_{j,t}\epsilon^2$, and for some $\epsilon$ small enough, e.g., smaller than $u_{j,t}(\hat z_{j,t})\delta/v_{j,t}$ the numerator is positive, indicating that this modification led to an increase of the NSW, which contradict the fact that $\hat z_t$ maximizes the NSW. 
% .  
% \agcomment{I am not sure I prefer this version (I thought about doing an argument this way when I originally wrote up the lemma). This seems to effectively be an analysis-style proof for the complementary slackness condition for a particular objective function, people might complain about that. And even if takes less space it is arguably harder to read than the current proof. }
\end{comment}

Fix the round $t$ of the algorithm's execution, and the values $v_{i,t}$. Note that by definition, the predicted utility of agent $i$ at the beginning of round $t$ (before any items are allocated), is $\tilde u_{i,t}(0)$. We prove the lemma by reducing step $t$ of the online allocation to a static Fisher market and showing that the statement of the lemma is equivalent to the KKT conditions for this market. 
    
Consider the following Fisher market: there are $N$ agents and $M=N+1$ items. Item $i$ is only valued by agent $i$ at $v^{\mathrm{Fisher}}_{i,i} = \tilde u_{i,t}(0)$ (i.e. their predicted utility at the beginning of round $t$); agents $i'\neq i$ have zero value for item $i$, that is, $v^{\mathrm{Fisher}}_{i',i}=0$. Agents value the last item at $v^{\mathrm{Fisher}}_{i,M}$ = $v_{i,t}/2$. Consider a static allocation $x$ of items to agents, where $x_{i,j} \in [0,1]$ is the fraction of item $j$ that is allocated to agent $i$. 
Then, the utility of agent $i$ under $x$ is $ x_{i,i}\tilde u_{i,t}(0)+\frac{x_{i,M}v_{i,t}}{2}$. 
    
    Consider the problem of maximizing Nash social welfare in this Fisher market. Clearly, it is optimal to allocate items $1$ to $N$ to the only agent who wants it ($x_{i,i}=1$), and thus it only remains to determine the values of $(x_{i,M}: i \in [N])$. Setting $x_{i,i} = 1$ for all $i \in [N]$, the values of $x_{i,M}$ that maximize Nash social welfare are the solution to the following convex optimization problem:
    \begin{align*}
        \max \; \sum_i \log\left(\tilde u_{i,t}(0)+\frac{x_{i,M}v_{i,t}}{2}\right) 
        \quad  
        \text{subject to \;\; $\sum_i x_{i,M} \leq 1$, and
        $x_{i,M}\geq 0$ for all $i \in [N]$.}
    \end{align*}

 Changing variables $z_{i} = \frac{x_{i,M}}{2}$ leads to the optimization problem identical to the one in \algoname. This implies the equivalence between the online solution $\hat z_t$ and the optimal solution $x^*$ for the constructed Fisher market:  $\hat z_{i,t} = x_{i,M}^*/2$.
On the other hand, KKT conditions for this static Fisher market give (\cite[Chapter 5]{AGT}):

$$x_{i,M}^*\neq 0 \Rightarrow \frac{v_{i,t}}{\tilde u_{i,t}(0)+\frac{v_{i,t}x_{i,M}^*}{2}} = \max_{i'} \frac{v_{i',t}}{\tilde u_{i',t}(0)+\frac{v_{i',t}x_{i',M}^*}{2}}$$

Now, substituting $\hat z_{i,t} = \frac{x_{i,M}^*}{2}$ yields the statement of the lemma.
\end{proof}

\subsection{Bounding the Competitive Ratio of \algoname}

Consider the allocations $\hat \bz$ made under \algoname, and the corresponding predicted prices $\tilde p_t (\hat z_t) = \max_i \frac{v_{i,t}}{\tilde u_{i,t}(\hat z_t)}$.
These prices satisfy the condition of~\cref{lem:prices}, since we have from the definition of the predicted prices that, for all $i$, $\tilde p_t (\hat z_t) = \max_{i'} \frac{v_{i',t}}{\tilde u_{i',t}(\hat z_t)}\geq \frac{v_{i,t}}{c_iu_i(\hat \bx)}$. 
To complete the proof of~\cref{thm:main_result}, we need to argue that $\sum_{t=1}^T \tilde p_t(\hat z_t) $ is small. 
% Recall we define $k=k(\bv)=T\max_{i,t}\{v_{i,t}\}$. 

\begin{lemma}
\label{lem:promised_price_one_round}
Under \algoname, in every round $t$ we have
$$ \tilde{p}_{t}(\hat{z}_t) \leq 2\sum_{i=1}^N \left[ \ln\left(\tilde{u}_{i,t}(\hat{z}_t)\right) - \ln\left(\tilde{u}_{i,t}(0)\right)\right].$$
\end{lemma}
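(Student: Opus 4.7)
The plan is to prove the inequality by leveraging the dual characterization from Lemma~\ref{lem:endowed_greedy_dual} to express the predicted price in terms of the semi-allocations and predicted utilities, then to bound the resulting expression using the standard inequality $\ln(x) \geq 1 - 1/x$.

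First, I would handle the trivial case: if $v_{i,t} = 0$ for all $i$, then $\tilde{p}_t(\hat{z}_t) = 0$ and the inequality holds vacuously. Otherwise, the objective $\sum_i \log(\tilde{u}_{i,t}(z_t))$ has a strictly positive gradient in some feasible direction, so the budget constraint is tight at $\hat{z}_t$: $\sum_i \hat{z}_{i,t} = 1/2$. Let $S = \{i : \hat{z}_{i,t} > 0\}$ denote the support. By Lemma~\ref{lem:endowed_greedy_dual}, for every $i \in S$,
\[
\frac{v_{i,t}}{\tilde{u}_{i,t}(\hat{z}_t)} = \tilde{p}_t(\hat{z}_t).
\]
Combined with the identity $\tilde{u}_{i,t}(\hat{z}_t) - \tilde{u}_{i,t}(0) = \hat{z}_{i,t} v_{i,t}$ (which is immediate from Definition~\ref{def:promised_utility}), this yields, for each $i \in S$,
\[
1 - \frac{\tilde{u}_{i,t}(0)}{\tilde{u}_{i,t}(\hat{z}_t)} = \frac{\hat{z}_{i,t} v_{i,t}}{\tilde{u}_{i,t}(\hat{z}_t)} = \hat{z}_{i,t}\, \tilde{p}_t(\hat{z}_t).
\]

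Next, I would sum this identity over $i \in S$. Since $\sum_{i \in S} \hat{z}_{i,t} = 1/2$, we obtain
\[
\frac{1}{2}\,\tilde{p}_t(\hat{z}_t) = |S| - \sum_{i \in S}\frac{\tilde{u}_{i,t}(0)}{\tilde{u}_{i,t}(\hat{z}_t)}.
\]
Now I apply the elementary inequality $\ln x \geq 1 - 1/x$ (valid for all $x > 0$) with $x = \tilde{u}_{i,t}(\hat{z}_t)/\tilde{u}_{i,t}(0)$, which gives
\[
\ln\!\bigl(\tilde{u}_{i,t}(\hat{z}_t)\bigr) - \ln\!\bigl(\tilde{u}_{i,t}(0)\bigr) \;\geq\; 1 - \frac{\tilde{u}_{i,t}(0)}{\tilde{u}_{i,t}(\hat{z}_t)}.
\]
Summing over $i \in S$ gives
\[
\sum_{i \in S}\bigl[\ln\!\tilde{u}_{i,t}(\hat{z}_t) - \ln\!\tilde{u}_{i,t}(0)\bigr] \;\geq\; |S| - \sum_{i \in S}\frac{\tilde{u}_{i,t}(0)}{\tilde{u}_{i,t}(\hat{z}_t)} \;=\; \frac{1}{2}\,\tilde{p}_t(\hat{z}_t).
\]
Finally, I would extend the sum on the left to all $i \in [N]$: for $i \notin S$ we have $\hat{z}_{i,t} = 0$, hence $\tilde{u}_{i,t}(\hat{z}_t) = \tilde{u}_{i,t}(0)$, so those terms contribute zero. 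Rearranging yields exactly the desired inequality.

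The only nontrivial step is recognizing that the KKT/equal-bang-per-buck condition supplied by Lemma~\ref{lem:endowed_greedy_dual} makes $\hat{z}_{i,t} \tilde{p}_t(\hat{z}_t)$ telescope neatly into the relative utility gain $1 - \tilde{u}_{i,t}(0)/\tilde{u}_{i,t}(\hat{z}_t)$; after that, $\ln x \geq 1 - 1/x$ converts this relative gain into a logarithmic potential increment. I expect no technical obstacle beyond verifying the budget is saturated in the nontrivial case.
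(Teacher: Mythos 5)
Your proof is correct and follows essentially the same route as the paper's: it combines the equal-ratio (KKT) condition from Lemma~\ref{lem:endowed_greedy_dual} with the saturated budget $\sum_i \hat{z}_{i,t} = \tfrac12$ and the inequality $1 - x \leq -\ln x$. The only differences are cosmetic --- you restrict the sum to the support and then extend it, and you explicitly justify budget saturation, which the paper takes for granted.
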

\begin{proof}
We have 
\begin{align*}
    \frac12\tilde{p}_{t}(\hat{z}_t)
    &= \sum_{i=1}^N \hat{z}_{i,t}\tilde{p}_{t}(\hat{z}_t) &\text{(since $\sum_i \hat{z}_{i,t} = \frac12$ for all $t$)} \\
    &= \sum_{i=1}^N  \hat{z}_{i,t}\cdot\frac{v_{i,t}}{\tilde{u}_{i,t}(\hat{z}_t)} &\text{(by Lemma \ref{lem:endowed_greedy_dual})} \\
    &=\sum_{i=1}^N \frac{\tilde{u}_{i,t}(\hat{z}_t) - \tilde{u}_{i,t}(0)}{\tilde{u}_{i,t}(\hat{z}_t)} &\text{(by definition of $\tilde{u}_{i,t}$)} \\
    &=\sum_{i=1}^N \left(1 - \frac{\tilde{u}_{i,t}(0)}{\tilde{u}_{i,t}(\hat{z}_t)}\right) \\
    &\leq \sum_{i=1}^N -\ln \left(\frac{\tilde{u}_{i,t}(0)}{\tilde{u}_{i,t}(\hat{z}_t)}\right) &\text{(using the inequality $1-x \leq - \ln x$)}\\
    &= \sum_{i=1}^N \left[\ln\left(\tilde{u}_{i,t}(\hat{z}_t)\right) - \ln\left(\tilde{u}_{i,t}(0)\right)\right].\tag*{\qedhere}
\end{align*}
\end{proof}

~\Cref{thm:main_result} is now a simple corollary of this bound.

\begin{lemma}
\label{lem:final_price_bound}
For the allocation $\hat z$ made by \algoname, the following holds:
\begin{align*}
\frac{\sum_{t=1}^T \tilde p_t(\hat z_t)}{N} 
&\leq \min\left\{\log(2N) + \frac{1}{N}\sum_{i=1}^N\log(d_i), \, \log(2T) + \log(\max_i\{d_i\}) \right\}.
\end{align*}
\end{lemma}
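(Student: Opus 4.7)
I would combine the per-round bound in Lemma~\ref{lem:promised_price_one_round} with a telescoping-in-$t$ argument, and then aggregate across agents in two complementary ways to recover the two terms inside the $\min$.

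Summing the conclusion of Lemma~\ref{lem:promised_price_one_round} over $t=1,\ldots,T$, the differences $\ln\tilde u_{i,t}(\hat z_t)-\ln\tilde u_{i,t}(0)$ telescope per agent, since Definition~\ref{def:promised_utility} yields $\tilde u_{i,t}(0) = \tilde u_{i,t-1}(\hat z_{t-1})$ for $t\geq 2$ and $\tilde u_{i,1}(0) = \wtd V_i/(2N)$. The double sum therefore collapses to $\sum_{i} \ln\!\bigl(\tilde u_{i,T}(\hat z_T)/(\wtd V_i/(2N))\bigr)$. The feasibility constraint $\sum_i\hat z_{i,t}\leq 1/2$ gives $\hat z_{i,t}\leq 1/2$ for all $i,t$, and hence $\sum_t\hat z_{i,t}v_{i,t}\leq V_i/2$; combined with $\wtd V_i \geq V_i/d_i$, this yields the per-agent bound
\[
\ln\!\left(\frac{\tilde u_{i,T}(\hat z_T)}{\tilde u_{i,1}(0)}\right) = \ln\!\left(1 + \frac{2N}{\wtd V_i}\sum_t \hat z_{i,t} v_{i,t}\right) \leq \ln(1 + N d_i).
\]

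For the first term in the minimum, I would use $\ln(1+Nd_i)\leq \ln(2Nd_i) = \log(2N)+\log d_i$ (valid since $N,d_i\geq 1$), sum over $i$, and divide by $N$ to obtain $\log(2N)+(1/N)\sum_i\log d_i$. For the second term, I would instead invoke Jensen's inequality on the concave function $\ln$: $(1/N)\sum_i \ln r_i \leq \ln((1/N)\sum_i r_i)$, where $r_i = \tilde u_{i,T}(\hat z_T)/\tilde u_{i,1}(0)$. A direct computation gives $(1/N)\sum_i r_i = 1 + 2\sum_{i,t}\hat z_{i,t}v_{i,t}/\wtd V_i$. Using $v_{i,t}\leq V_i$, $\wtd V_i\geq V_i/\max_j d_j$, and $\sum_i \hat z_{i,t}\leq 1/2$, the inner double sum is at most $(T/2)\max_j d_j$, so the resulting term becomes $\ln(1+T\max_j d_j) \leq \log(2T)+\log(\max_j d_j)$.

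Conceptually, the key observation is that the telescoped quantity $\sum_i \ln(\tilde u_{i,T}(\hat z_T)/\tilde u_{i,1}(0))$ admits two genuinely different aggregations: a term-by-term bound that produces the $\log N$-style term (with the $d_i$'s appearing individually, which is why their geometric mean shows up), and a Jensen-style bound that uses the global per-round budget $\sum_i \hat z_{i,t}\leq 1/2$ together with $v_{i,t}\leq V_i$ to trade the factor of $N$ for a factor of $T$. Taking the minimum of the two yields the stated claim. The main bookkeeping concern will be carrying the factor of $2$ from Lemma~\ref{lem:promised_price_one_round} (and the additional $2$ picked up in $1+Nd_i\leq 2Nd_i$) through the computation; these constants have to be absorbed into the logarithmic terms, but no deeper idea is needed beyond that.
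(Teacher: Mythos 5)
Your proposal follows essentially the same route as the paper's own proof: telescope the per-round bound of Lemma~\ref{lem:promised_price_one_round} over $t$ to reduce everything to $\sum_i \log\bigl(\tilde u_{i,T}(\hat z_T)/\tilde u_{i,1}(0)\bigr)$, then bound this either agent-by-agent (via $\sum_t \hat z_{i,t}v_{i,t}\le V_i/2$ and $\wtd{V}_i\ge V_i/d_i$) to get the $\log(2N)$ term, or via AM--GM/Jensen together with the per-round budget $\sum_i\hat z_{i,t}\le 1/2$ and $v_{i,t}\le V_i$ to get the $\log(2T)$ term. The factor of $2$ you flag is a genuine wrinkle, but it is one the paper itself glosses over --- its proof silently drops the $2$ when invoking Lemma~\ref{lem:promised_price_one_round} --- so carried honestly it doubles the stated bound (or forces reading $\log$ as $2\ln$); this is a bookkeeping defect shared with the paper rather than a gap in your argument.
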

\begin{proof}[Proof of Lemma \ref{lem:final_price_bound}]

Using \Cref{lem:promised_price_one_round}, we get
\begin{align*}
    \sum_{t=1}^T \tilde{p}_t 
    &\leq \sum_{i=1}^N \sum_{t=1}^T \left(\log(\tilde{u}_{i,t}(\hat{z}_t)) - \log(\tilde{u}_{i,t}(0)) \right) \\
    &= \sum_{i=1}^N \left( \log(\tilde{u}_{i,T}(\hat{z}_T)) - \log(\tilde{u}_{i,1}(0))\right)\\
    &= \sum_{i=1}^N \left( \log(\tilde{u}_{i,T}(\hat{z}_T))) - \log\frac{\wtd{V}_i}{2N}\right)\\
    &= N\log(2N) + \sum_{i=1}^N \log\left(\frac{\tilde{u}_{i,T}(\hat{z}_T)}{\wtd{V}_i}\right).
\end{align*}
Observe that 
$$\tilde{u}_{i,T}(\hat{z}_T) 
= \frac{\wtd{V}_i}{2N} +\sum_{t=1}^T v_{it}\hat{z}_{it} 
\leq \frac{\wtd{V}_i}{2N} + \left(\sum_{t=1}^T v_{it}\right) \max_t \hat{z}_{it}
= \frac{\wtd{V}_i}{2N} + \frac12 V_i
\leq  \frac{\wtd{V}_i}{2N} + \frac12 d_i\wtd{V}_i.$$
Hence, 
$$
\frac{\tilde{u}_{i,T}(\hat{z}_T)}{\wtd{V}_i}
\leq \frac{1}{2N} + \frac{d_i}{2} \leq d_i.
$$
This gives $\sum_{t=1}^T \tilde{p}_t \leq N\log(2N) + \sum_{i=1}^N \log(d_i)$, which proves the first part of the claimed bound.

On the other hand, for $d = \max_i\{d_i\}$, we have the bound
\begin{align*}
    \sum_{i=1}^N \log \left( \frac{\tilde{u}_{i,T}(\hat{z}_T)}{\wtd{V}_i} \right)
    &= N\log\left( \left(\prod_{i=1}^N \frac{\tilde{u}_{i,T}(\hat{z}_T)}{\wtd{V}_i}\right)^{\frac{1}{N}}\right) \\
    &\leq N\log\left( \frac{1}{N} \sum_{i=1}^N \frac{\tilde{u}_{i,T}(\hat{z}_T)}{\wtd{V}_i}\right) \\
    &= N\log\left( \frac{1}{N} \sum_{i=1}^N\frac{1}{\wtd{V}_i} \left( \frac{\wtd{V}_i}{2N} + \sum_{t=1}^T v_{it}{\hat{z}_{it}} \right)\right) \\
    &= N\log\left( \frac{1}{2N} + \frac{1}{N} \sum_{t=1}^T \sum_{i=1}^N \frac{ v_{it}{\hat{z}_{it}}}{\wtd{V}_i}  \right) \\
    &\leq N\log\left( \frac{1}{2N}+ \frac{1}{N} \sum_{t=1}^T \frac12\max_i\left\{ \frac{v_{it}}{\wtd{V}_i}\right\}\right)\\
    &\leq N\log\left( \frac{1}{2N}+ \frac{Td}{2N}\right)\\
    &\leq N\log\left( \frac{Td}{N} \right)
\end{align*}
This gives $\sum_{t=1}^T \tilde{p}_t \leq N\log(2N) + N\log\left( \frac{Td}{N} \right) = N \log(2Td)$.

% Using \Cref{lem:promised_price_one_round}, we get
% \begin{align*}
%     \sum_{t=1}^T \tilde{p}_t 
%     &\leq \sum_{i=1}^N \sum_{t=1}^T \left(\log(\tilde{u}_{i,t}(\hat{z}_t)) - \log(\tilde{u}_{i,t}(0)) \right) \\
%     &= \sum_{i=1}^N \left( \log(\tilde{u}_{i,T}(\hat{z}_T)) - \log(\tilde{u}_{i,1}(0))\right)\\
%     &= \sum_{i=1}^N \left( \log(u_i(\hat{x}))) - \log\frac{\wtd{V}_i}{2N}\right)\\
%     &= N\log(2N) + \sum_{i=1}^N \log\left(\frac{u_i(\hat{x})}{\wtd{V}_i}\right).
% \end{align*}

% Observe that $u_i(\hat{x}) \leq V_i \leq d_i\wtd{V_i}$. This gives $\sum_{t=1}^T \tilde{p}_t \leq N\log(2N) + \sum_{i=1}^N \log(d_i)$.

% On the other hand, for $d = \max_i\{d_i\}$, we have the bound 
% \begin{align*}
%     \sum_{i=1}^N \log \left( \frac{u_i(\hat{x})}{\wtd{V}_i} \right)
%     &= N\log\left( \left(\prod_{i=1}^N \frac{u_i(\hat{x})}{\wtd{V}_i}\right)^{\frac{1}{N}}\right) \\
%     &\leq N\log\left( \frac{1}{N} \sum_{i=1}^N \frac{u_i(\hat{x})}{\wtd{V}_i}\right) \\
%     &= N\log\left( \frac{1}{N} \sum_{t=1}^T\sum_{i=1}^N \frac{v_{it}\hat{x}_{it}}{\wtd{V}_i}\right) \\
%     &\leq N\log\left( \frac{1}{N} \sum_{t=1}^T\max_i\left\{ \frac{v_{it}}{\wtd{V}_i}\right\}\right)\\
%     &\leq N\log\left( \frac{Td}{N} \right)
% \end{align*}
% This gives $\sum_{t=1}^T \tilde{p}_t \leq N\log(2N) + N\log\left( \frac{Td}{N} \right) = N \log 2 + N \log(Td)$.

\end{proof}

\begin{proof}[Proof of \Cref{thm:main_result}]
The theorem follows directly from~\cref{lem:prices} and \cref{lem:final_price_bound}, as the predicted prices $\tilde p_t(\hat z_t)$ satisfy the inequalities in the statement of \cref{lem:prices}.
\end{proof}

%Finally, we note that though the choice of allocating half the item uniformly (rather than some other constant fraction) is somewhat arbitrary: however, the optimal fraction does not admit a simple closed form, but moreover, does not improve the competitive ratio beyond constant factors.

\section{Lower Bounds}
\label{sec:lowerbnds}

In this section, we complement our positive result by showing that the guarantee achieved by the \algoname\ algorithm is tight up to sub-logarithmic factors.
To build some intuition into the hardness of maximizing NSW, we first show that for a closely related problem of MW maximization, \emph{no algorithm can have competitive ratio of $O(\sqrt{N})$}. 
\begin{theorem}
\label{thm:impossibility_maxmin}
No online algorithm can achieve a competitive ratio better than $O(\sqrt{N})$ with respect to the maxmin welfare, even assuming symmetric agents and perfect predictions (i.e., $\wtd{V}_i = V_i=1\,\forall\,i$), and that all $v_{i,t}\in \left\{0, \frac{1}{2}\right\}$.
\end{theorem}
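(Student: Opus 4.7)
The plan is to construct an adaptive instance on which every online algorithm is forced to have $\text{MW}(\hat{\bx}) \le 1/\sqrt{N}$ while the offline optimum achieves $\text{MW}(\bx^{\text{MW}}) \ge 1/2$, yielding $\gamma^{\text{MW}} = \Omega(\sqrt{N})$. Set $K := \sqrt{N}$ and partition $[N]$ into $K$ equally sized groups $G_1, \dots, G_K$. The adversary presents items in three phases: (i) \emph{group rounds} $1,\dots,K$, where in round $t$ every agent in $G_t$ has $v_{i,t}=1/2$ and everyone else has value $0$; (ii) a single \emph{coordinating round} $K+1$, where, after observing the group-round allocations, the adversary designates as the ``loser'' $\ell_t \in G_t$ the agent who received the smallest fraction of round $t$, and sets $v_{\ell_t, K+1} = 1/2$ for the $K$ losers; (iii) $N-K$ \emph{solo rounds}, one per non-loser, in which only that agent has value $1/2$. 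Each agent thus has exactly two rounds of value $1/2$, so $V_i = 1$, $v_{i,t}\in\{0,1/2\}$, and the perfect-prediction assumption $\wtd{V}_i = 1$ are all satisfied.

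I would upper bound the algorithm by applying pigeonhole twice. In each group round $t$, the $K$ agents of $G_t$ split at most one unit of allocation, so by pigeonhole $\ell_t$ receives fraction at most $1/K$ and hence utility at most $1/(2K)$ from round $t$. In the coordinating round, the $K$ losers again split at most one unit, so some loser $\ell_{t^\star}$ receives fraction at most $1/K$ and hence utility at most $1/(2K)$ from round $K+1$. Since rounds $t^\star$ and $K+1$ are $\ell_{t^\star}$'s only two active rounds, its total utility is at most $1/K = 1/\sqrt{N}$. For the offline bound I would exhibit the allocation that gives each group round $t$ entirely to $\ell_t$ (utility $1/2$), splits the coordinating round equally among the $K$ losers (an additional $1/(2K)$ each), and gives each non-loser its solo round in full (utility $1/2$); then every loser has utility $\ge 1/2$, every non-loser has utility exactly $1/2$, so $\text{MW}(\bx^{\text{MW}}) \ge 1/2$ and the claimed ratio follows.

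The key conceptual step is phase (ii): the adversary uses its adaptivity to funnel the worst-served agent from each otherwise-independent group round into a single coordinating round, so that the two pigeonhole losses compound multiplicatively rather than average out. The main technical constraint to respect throughout is $V_i = 1$ with $v_{i,t} \in \{0, 1/2\}$: each agent has exactly two ``active'' rounds, and the construction must fill them consistently — losers use their group round plus the coordinating round, non-losers use their group round plus their solo round — so that no agent is ever asked to have nonzero value in a third round. This very tightness is what prevents a naive recursion from extending beyond two layers here, and it foreshadows the additional care required for the analogous logarithmic lower bound for NSW.
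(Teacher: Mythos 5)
Your construction is essentially the paper's: partition the agents into $\sqrt{N}$ groups of size $\sqrt{N}$, force a pigeonhole loss in each group's shared round, collide the worst-served agent of each group in a single later round, and give every other agent a solo round, so that the offline optimum gets maxmin welfare $\frac12$ while the online algorithm is stuck at $\frac{1}{\sqrt{N}}$. The only difference is that you designate each group's ``loser'' adaptively via pigeonhole, whereas the paper fixes a random ``special'' agent per group and argues by symmetry that uniform allocation is the best online response --- both give the same $\frac{\sqrt{N}}{2}$ bound, and your adaptive variant makes that step slightly more airtight.
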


This shows that MW is a problematic objective in online fair allocation as it does not admit a good competitive ratio even with perfect predictions. Moreover, the proof of this result provides insight into the more complex lower bound construction for online NSW maximization.

\begin{theorem}\label{thm:main_negative_result}
No online algorithm can achieve a competitive ratio of $O(\log^{1-\epsilon}{N})$ or $O(\log^{1-\epsilon}{T})$ for a constant $\epsilon>0$ with respect to the Nash social welfare, even if it knows that $V_i=1$ for all $i$.
% For online Nash social welfare maximization, for any $\epsilon>0$, no algorithm can achieve a competitive ratio of $O(\log^{1-\epsilon}{N})$ or $O(\log^{1-\epsilon}{T})$.
\end{theorem}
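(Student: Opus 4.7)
The plan is to build an adaptive adversarial instance satisfying $V_i=1$ for every agent, for which no online algorithm can achieve an NSW competitive ratio better than $\omega(\log^{1-\epsilon}(N))$. Since the algorithm is deterministic and the adversary is adaptive, it suffices to produce valuations round-by-round as a function of the algorithm's past allocations. I would organize the construction around $L=\Theta(\log N)$ phases, maintaining a nested family of ``active sets'' $A_1\supseteq A_2\supseteq\cdots\supseteq A_{L+1}$ with $A_1=[N]$ and $|A_\ell|=N/2^{\ell-1}$. In phase $\ell$, items are presented whose values are positive only for agents in $A_\ell$, scaled so that each active agent could accumulate a fixed amount of value in the phase and so that an optimal hindsight allocation can confer $\Theta(1/L)$ utility per phase to any agent it targets. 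After observing the online allocations in phase $\ell$, the adversary halves the active set by retaining only the $|A_\ell|/2$ agents with the lowest accumulated online utility so far. Agents that are dropped from the active set are supplied with dedicated ``filler rounds'' (items valued only by the single dropped agent) to bring their $V_i$ up to $1$; these fillers impose no dilemma on the online algorithm, which simply routes them to their unique valuing agent.

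With the construction in place, the analysis proceeds in three steps. First, because an agent that survives into $A_{L+1}$ was in the bottom half of the online utility ranking during every phase and received no filler value, a pigeonhole bound on its phase-by-phase share shows that its total online utility is at most $O(1/\mathrm{polylog}(N))$. Second, the offline optimum, knowing $A_{L+1}$ in hindsight, can funnel the majority of each phase's contest value to $A_{L+1}$-agents, giving them $\Omega(1)$ utility; the dropped agents still reach $\Omega(1)$ utility because their filler rounds already account for a constant fraction of their $V_i=1$ budget. Third, combining these bounds on a per-agent basis gives a gap of at least $\Omega(\mathrm{polylog}(N))$ on a suitably sized subset of agents, and after tuning $L$ and the per-phase value scales the geometric-mean ratio reaches $\Omega(\log^{1-\epsilon}(N))$. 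The $\log^{1-\epsilon}(T)$ bound follows because the construction uses $T=\mathrm{poly}(N)$ rounds, so $\log T=\Theta(\log N)$.

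The main obstacle, which makes this construction more delicate than the analogous $\sqrt{N}$ lower bound for maxmin welfare, is the need to simultaneously amplify \emph{two} things: the \emph{number} of agents who are starved (which controls the exponent in the NSW ratio) and the \emph{factor} by which they are starved. The halving argument gives a strong per-agent starvation factor for $A_{L+1}$, but because NSW is a geometric mean over all $N$ agents, the other agents must also be shown to suffer a non-trivial online loss relative to OPT; otherwise the ratio dilutes to $o(\log^{1-\epsilon}(N))$. I expect this to require either a non-uniform schedule for $|A_\ell|$ (so that a constant fraction of the agents are in the bottom half of \emph{some} phase where the value concentration is high) or a recursive application of the halving idea at multiple scales, together with a potential-function argument tracking both the online utility spread within each $A_\ell$ and OPT's residual budget. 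A secondary, more mundane difficulty is ensuring that the filler rounds are interleaved in time so that they do not leak information about future phases, and that the $V_i=1$ constraint remains feasible throughout the adaptive process.
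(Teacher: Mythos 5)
There is a genuine gap here, and it is exactly the one you flag at the end but do not resolve. In your construction the starved agents are the survivors $A_{L+1}$, of which there are only $N/2^{L}$. Since NSW is a geometric mean over all $N$ agents, starving a set $S$ by a factor $F$ while everyone else has a constant ratio yields a competitive ratio of only $F^{|S|/N}=F^{1/2^{L}}$. With $L=\Theta(\log N)$ and $F=\mathrm{polylog}(N)$ (which is the most your pigeonhole-over-phases argument can deliver, since OPT caps each agent at $V_i=1$), this is $1+o(1)$, not $\Omega(\log^{1-\epsilon}N)$. No tuning of the per-phase scales fixes this: to compensate for $|S|=N/2^{L}$ you would need $F=\log^{(1-\epsilon)2^{L}}N$, which is superpolynomial and unattainable. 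Your filler rounds make matters worse, since every dropped agent receives its $\Omega(1)$ budget uncontested and hence has ratio $O(1)$, contributing nothing to the gap. The missing idea is that you must starve a \emph{constant fraction} of the agents, each by a \emph{modest} factor of only $\log^{1-\epsilon}N$, rather than a vanishing fraction by a large factor.

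The paper's construction does precisely this. It runs a ``banishment'' process in short months of $M=\log^{1-\epsilon}N$ rounds: in each month, $M$ not-yet-classified agents compete over $M$ items whose values $w^{m,\ell}$ grow so fast within the month that only the last contested round matters for each agent; by pigeonhole some agent receives at most a $1/M$ share of its decisive item and is banished. Iterating over $L=O(1)$ years banishes a $(1-1/2^{L})$ fraction of the agents, each starved by a factor $M$. The remaining $V_i=1$ budget of the banished agents is released in a \emph{single final round} where all $\Theta(N)$ of them clash over one item with comparable values, so it cannot restore their utility by even a constant factor; the non-banished agents get their budget uncontested (your ``filler'' idea, but applied only to the agents who are \emph{not} supposed to suffer). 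The geometric mean then gives $M^{1-1/2^{L}}=\log^{(1-\epsilon)(1-1/2^{L})}N$, and taking $L$ large recovers the exponent. You have the right ingredients (adaptive pigeonhole, agents who will later clash versus agents who resolve alone, $T=\mathrm{poly}(N)$ for the $\log T$ bound), but the halving cascade is the wrong amplification mechanism for a geometric-mean objective, and your proposal does not contain a workable substitute.
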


Both these results are based on hard instances which are similar in spirit: First, at some time $t$ the algorithm is confronted by agents with identical values; next, in later rounds, some of these agents are able to obtain large values without competition (i.e., there are rounds when only one agent has a non-zero value), while other agents clash over a single item, and are thus unable to get much value from future rounds. Note that in both cases, the online algorithm has perfect predictions of the monopolist utilities $V_i$, and moreover, ex post the agents are symmetric (in that they all have $V_i=1$). The hardness arises instead from the fact that an online algorithm is unable to predict which agents will clash in future.

\subsection{Hardness of Approximating Online Maxmin Welfare}
\label{ssec:lowerbndminmax}
We first give the lower bound construction for \cref{thm:impossibility_maxmin}, which, as we mentioned, exposes the main idea in the lower bound construction for NSW in~\cref{sec:hardnessNSW}: when facing identical agents in earlier rounds, an online algorithm can do no better than treat them symmetrically; however, some of these agents may later have high value in rounds with low competition, while others may only care for items that are highly competed for, making them harder to satisfy. An algorithm with access to future valuations can prioritize the latter agents in earlier rounds, which no online algorithm can  hope to achieve. Adapting this for NSW, however, is much more complicated. 
%\begin{wrapfigure}{r}{0.5\textwidth}
\begin{figure*}[!t]
{ $$
V = \begin{bmatrix}
{\color{red}\mathbf{1/2}} & 1/2 & 1/2 & \cdot & \cdot & \cdot & \cdot & \cdot & \cdot \\
\cdot & \cdot & 1/2 & \cdot & \cdot & \cdot & \cdot & \cdot & \cdot\\
\cdot & 1/2 & \cdot & \cdot & \cdot & \cdot & \cdot & \cdot & \cdot\\
\cdot & \cdot & \cdot & {\color{red}\mathbf{1/2}} & 1/2 & 1/2 & \cdot & \cdot & \cdot\\
\cdot & \cdot & \cdot & \cdot & 1/2 & \cdot & \cdot & \cdot & \cdot\\ 
\cdot & \cdot & \cdot & \cdot & \cdot & 1/2 & \cdot & \cdot & \cdot\\
\cdot & \cdot & \cdot & \cdot & \cdot & \cdot & 1/2 & {\mathbf{\color{red}1/2}} & 1/2\\
\cdot & \cdot & \cdot & \cdot & \cdot & \cdot & \cdot & \cdot  & 1/2 \\
\cdot & \cdot & \cdot & \cdot & \cdot & \cdot & 1/2 & \cdot & \cdot \\
{\mathbf{\color{red}1/2}} & \cdot & \cdot & {\mathbf{\color{red}1/2}} & \cdot & \cdot & \cdot & {\color{red}\mathbf{1/2}} & \cdot
\end{bmatrix}
$$}
\caption{\it\small Lower bound construction for maxmin welfare from~\cref{thm:impossibility_maxmin} with $N=9$ and $T=10$. Each row is a round and each column is an agent. Values of $0$ are indicated with `\,$\cdot$', and the special agent in each group of $3$ is highlighted in bold red.}
\label{fig:eswlower}
\vspace{-0.4cm}
\end{figure*}
%\end{wrapfigure}

% \begin{theorem}
% \label{thm:impossibility_maxmin}
% No online algorithm can achieve a competitive ratio better than $\frac{\sqrt{N}}{2} + \frac{1}{2\sqrt{N}}$ with respect to the maxmin welfare, even if it knows that $V_i=1$ for all $i$ and all the $v_{i.t}$ values are constrained to be in $\left\{0, \frac{1}{2}\right\}$.
% \end{theorem}
\begin{proof}[Proof of~\cref{thm:impossibility_maxmin}]
We construct a family of instances where $N$ is the square of an integer, and $T=N+1$.
%, and the adversary is allowed to adaptively choose the values $v_{i,t}$ depending on the prior decisions of the algorithm. 
The instance with $N=9$ is depicted in~\cref{fig:eswlower}. Agents are grouped into $\sqrt{N}$ groups, each of size $\sqrt{N}$, with one special agent chosen uniformly at random in each group. 
The $T$ rounds are grouped in $\sqrt{N}$ epochs of $\sqrt{N}$ rounds each, plus one final round. In each epoch $i$, only agents in the corresponding group $i$ have non-zero values.
In the first round of epoch $i$, the $\sqrt{N}$ agents in corresponding group $i$ have value $\frac12$; in the subsequent $\sqrt{N}-1$ rounds, each agent in the group except for the special agent \emph{resolves}, i.e., has value $\frac12$ while all other agents have value $0$. 
In the very last round (that is, the $T^{\text{th}}$ round), the $\sqrt{N}$ special agents all arrive simultaneously with value $\frac12$. 

By symmetry, an online algorithm can do no better than allocating uniformly among the agents with nonzero value in each round: This gives a utility of $\frac{1}{2\sqrt{N}} + \frac12$ for every non-special agent, and $\frac{1}{\sqrt{N}}$ for every special agent.
On the other hand, consider an algorithm that allocates the entire item to the special agent in the first round of each epoch, and allocates uniformly among the agents with nonzero value in each other round: This algorithm gets utility at least $\frac12$ for all agents, which immediately gives a $\frac{\sqrt{N}}{2}$ lower bound for the competitive ratio of any online algorithm.
In more detail, one can compute that the optimal offline algorithm allocates $1 + \frac{1}{N} - \frac{1}{\sqrt{N}}$ of the first item in each epoch  to the special agent, resulting in a final utility of $\frac12 + \frac{1}{2N}$ for all agents. Hence, the competitive ratio of any algorithm is bounded below by 
$$\gamma^{\text{MW}} \geq \frac{\frac12 + \frac{1}{2N}}{\frac{1}{\sqrt{N}}} = \frac{\sqrt{N}}{2} + \frac{1}{2\sqrt{N}}.$$
\end{proof}
%\input{obstacles.tex}
%\subsection{A Matching Lower Bound for the Nash Social Welfare}
%\label{sec:hardnessNSW}
% Finally, we complement our positive result by showing that it is tight up to sub-logarithmic factors.

% \begin{theorem}\label{thm:main_negative_result}
% No online algorithm can achieve a competitive ratio of $O(\log^{1-\epsilon}{N})$ or $O(\log^{1-\epsilon}{T})$ for a constant $\epsilon>0$ with respect to the Nash social welfare, even if it knows that $V_i=1$ for all $i$.
% % For online Nash social welfare maximization, for any $\epsilon>0$, no algorithm can achieve a competitive ratio of $O(\log^{1-\epsilon}{N})$ or $O(\log^{1-\epsilon}{T})$.
% \end{theorem}

% The hard instance here is similar in spirit to the one we use in the proof of~\cref{thm:impossibility_maxmin}. First, at some time $t$ we ask the algorithm to allocate between agents with identical values. In later rounds, some agents are able to obtain large values without competition (i.e., there are rounds when only one agent has a non-zero value). Other agents, however, clash for a single item on some future round $t'>t$, and are thus unable to get much value from future rounds. This scheme exploits the fact that an online algorithm is unable to predict which agents are going to clash.

\subsection{Hardness of Online NSW Maximization}
\label{sec:hardnessNSW}

We first give a description of the lower bound instance used in proving \cref{thm:main_negative_result}.

%Let's choose an integer $M = \log^{(1-\epsilon)} N$ (throughout the proof we assume that for large enough $N$ and rational $\epsilon$ all choices of integer parameters are in fact integer), and also a constant (i.e., independent of $N$) integer $L$. Let $w^{m,\ell}$ for $m\leq M,\ell\leq L$ be numbers s.t. $w^{m,\ell} = \frac{1}{\log^{(L-\ell+1)(M-m+1)} N}$. Note that for any constant $L$, $w^{m,\ell}=\Omega(1/N)=o(1/M)$, and also that for any fixed $\ell$ we have $\frac{w^{m,\ell}}{w^{m+1,\ell}}=\frac{1}{\log N}=o(1/M)$.
Let $N$ be large enough such that $M = \log^{(1-\epsilon)} N$ is an integer (indeed, throughout the proof we assume that for large enough $N$ and rational $\epsilon$, all our parameters are integers). Also, let $L$ be an integer constant to be chosen later on. For each pair of  integers $m,\ell$ such that $1\leq m\leq M$ and $1\leq \ell\leq L$, we let $w^{m,\ell} = \frac{1}{\log^{(L-\ell+1)(M-m+1)} N}$. Note that for any constant $L$, we have $w^{m,\ell}=\Omega(1/N)$ and $w^{m,\ell}=o(1/M)$.  Also, for any fixed $\ell$ we have $\frac{w^{m,\ell}}{w^{m+1,\ell}}=\frac{1}{\log N}=o(1/M)$.

\begin{figure*}[t]
%\center
\includegraphics[trim=0 300 0 300,clip,width=\textwidth]{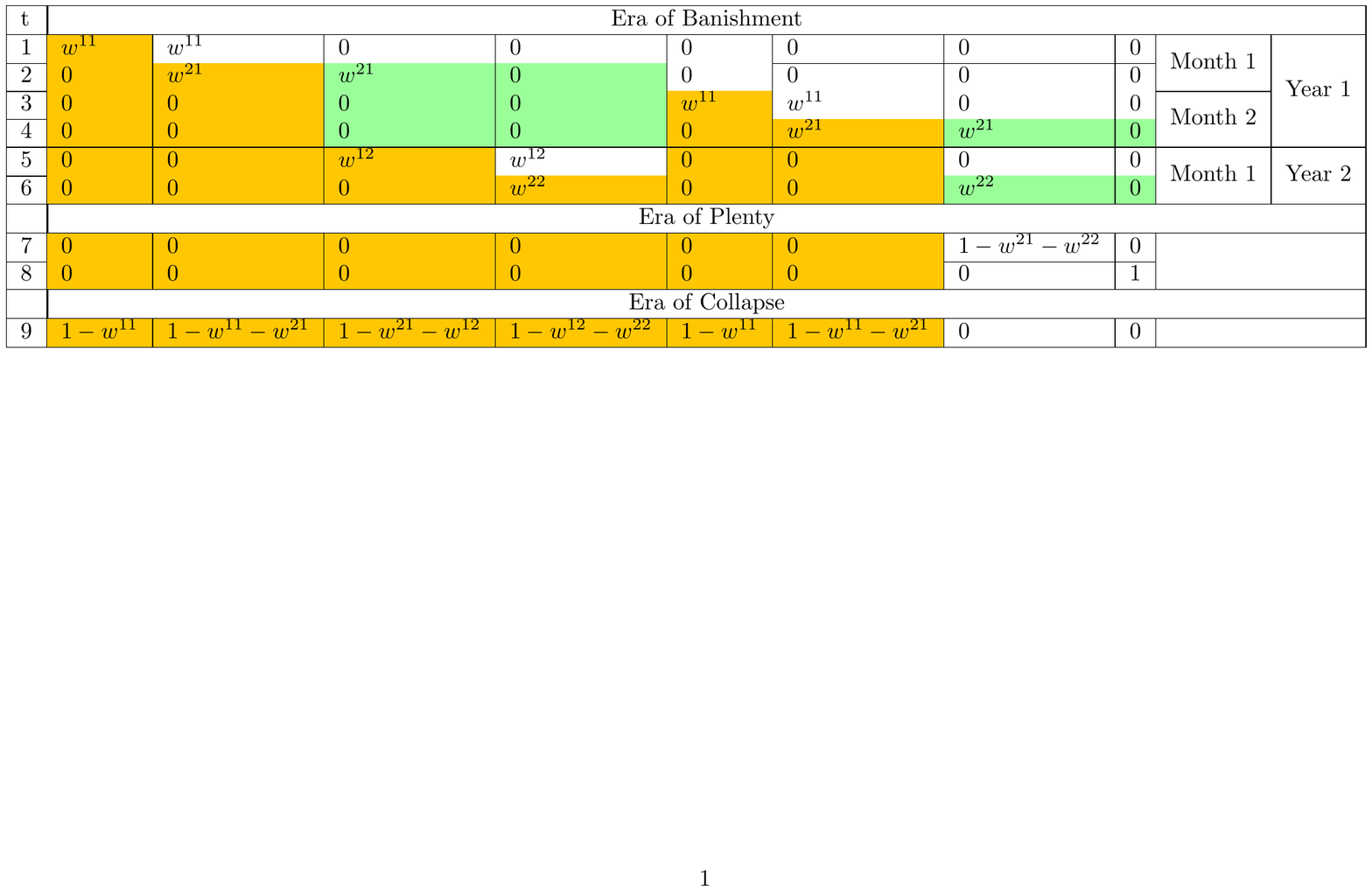}
\caption{\it\footnotesize Competitive-ratio lower bound for $NSW$: illustrating the hard instance from~\cref{thm:main_negative_result}. Values in cells are the values of the agents on different rounds, orange represents the agent being banished, and green being clear.}
\label{fig:lowerbndNSW}
\end{figure*}
 
To make our description more readable, we split up rounds in our instance into a hierarchy of repeating cycles. In more detail, our instance is made up of three \emph{eras}: the Era of Banishment, the Era of Plenty, and the Era of Collapse. The Era of Banishment is further subdivided into \emph{years}, with a year comprising of  \emph{months}. The structure of the instance is illustrated in~\cref{fig:lowerbndNSW}.

\textbf{Era of Banishment}. This era lasts $N(1-1/2^L)$ rounds. It consists of $L$ years (indexed by $\ell \in [L]$), with year $\ell$ lasting $N/2^\ell$ rounds. Each year consists of months (indexed by $m \in [M]$), where each month lasting $M$ rounds. (Note that different years contain different number of months. Namely, year $\ell$ contains $\frac{N}{2^\ell M}$ months.) In each year of the Era of Banishment, agents are split into ``banished'' and ``cleared'', by banishing $M$ agents and clearing $M$ agents per month. Once banished, an agent never has non-zero valuation again until very last round (the Era of Collapse). The cleared agents lose their status at the end of the year (and can become cleared or banished in the next year).

In the first round of month $m$ of year $\ell$, there are $M$ agents who are neither banished nor cleared who have value $w^{1,l}$ (everyone else has value 0). After the algorithm $\mathcal A$ makes an allocation $x_t$, we find an agent for whom $x_{i,t}\leq 1/M$ (such agent always exists by pigeonhole principle) and this agent becomes banished.

On round $z\in[2,M]$ of this month there are $M-1$ agents from previous round plus one new agent (who isn't banished and hasn't been cleared this year) with values $w^{z\ell}$ (remaining agents, including the banished ones, have value 0). After the allocation is made, we repeat the banishing procedure described above. By the end of the month, we have banished $M$ agents, and we declare all agents who had non-zero value this month but weren't banished (there are $M-1$ such agents) to be cleared. To make the accounting easier, we additionally clear 1 more agent who hasn't been cleared or banished before in this year (thus equalizing number of banished and cleared agents in this month). Also note that all agents can have non-zero value during only one month in a year, since during this month they have to become either banished or cleared. 

The number of months in the year is $\frac{N}{2^\ell M}$, as after this many months every agent is either banished or cleared. At the end of the year, the cleared agents lose their status (but banished agents remain such), and the next year begins. 

Thus, every year of this era we banish half the agents that were not banished, and by the end of the era (which lasts $L$ years) we banish $N(1-1/2^L)$ agents.

We also note that very little value is seen by the agents during this Era (compared to their total allowed value of $\sum_t v_{i,t}=1$). Since during the Era of Banishment each agent participates in at most one month a year, the total value seen by an agent during the era of Banishment can be bound by $\sum_{m,\ell} w^{m,\ell}\leq MLw^{M,L} = \frac{L\log^{1-\epsilon}N}{\log N} = o(1)$ (since the number of years $L$ is a constant).

\textbf{Era of Plenty.} This era begins at $t = N(1-1/2^{L})+1$ and lasts $N/2^L$ rounds. At every round there is a single non-banished agent with the remainder of their value $v_{i,t} = 1-\sum_{t'=1}^{N(1-1/2^L)}v_{i,t'}=1-o(1)$ (see the paragraph above for the justification of the last equality). All other agents have zero value on this round).

\textbf{Era of Collapse.} This era consists of just one round, $t=T=N+1$. At this round, all banished agents have the remainder of their value $v_{i,t}= 1-\sum_{t'=1}^{N}v_{i,t'} = 1-o(1)$.

Given the above instance, we are now ready to prove our hardness result.
\begin{proof}[Proof of \cref{thm:main_negative_result}]
Assume that there exists an online algorithm $\mathcal A$ that achieves an competitive ratio of $\log^{{1-\epsilon}}T$ or $\log^{{1-\epsilon}}N$ for some constant $\epsilon \geq 0$. We prove that this leads to a contradiction by showing that $\mathcal A$ would fail to satisfy these guarantees for the instance described above. 

We first provide a lower bound for the optimal Nash social welfare in this instance. During the Era of Banishment we can give the item to the agent who is about to become banished, during the Era of Plenty we give the item to the only agent with non-zero value, during the Era of Collapse we split the item equally between the agents.

In this case, utility of banished agents $u_i^*$ can be lower bounded with the value $w^{ij}$ on the round when they got banished, and utility of non-banished agents is $1-o(1)$, because $\sum_{m\leq M,\ell\leq L} w^{m,\ell} = o(1)$ according to our definition of $w^{m,\ell}$ above.

We now compute utilities under the allocation made by the online algorithm. Every banished agent got $1/M$ fraction of the item on the round that they got banished, and their values on previous rounds is negligible $v_{i,t'}= o(v_{i,t}/M)$, since $\frac{w^{m-1,\ell}}{w^{m,\ell}} = o(1/M)$. 

Finally, we argue that  the last round does not affect the utilities of banished agents much. Intuitively, this is because there is only one item and $O(N)$ banished agents who have approximately equal value for it. We provide a more careful argument below.

We upper bound optimal utilities on the last round by assuming banished agents got $1/M$ of the item on which they got banished (by construction, maximum possible amount), and assuming that the algorithm makes optimal NSW allocation on the last round.% It turns out that optimal NSW solution is to allocate the item between the agents with value $w^{11}$, as we argue below.\\
\\
The optimal NSW solution, when choosing between agents with equal values, allocates the marginal fraction of the item to the agent with lowest utility. Note that, on the last round, there are $N/2M$ agents with value $w^{11}$ (the smallest value of $w^{lm}$). Thus, if the item is distributed between them, each agent gets the utility of $\frac{2M}{N}=\frac{2\log^{1-\epsilon}}{N} = o(w^{11}/M)$, thus not improving their utility by even a constant factor ( and so it will distribute the entire item between the agents with $w^{11}$)
. 

Thus, overall, utility of banished agents can be upper-bounded by $u_i \leq \frac{u_i^*}{M}+o(\frac{u_i^*}{M})$. We bound utility of non-banished agents with $u_i\leq 1$.

{Now we can bound the competitive ratio achieved by the online algorithm as follows:}

$$
\frac{\mathrm{NSW}(\xv)}{\mathrm{NSW}(\xv^*)} \leq \left [ \frac{u_i}{u_i^*}\right]^{1/N} \leq \left [ \left(\frac{1}{M}\right)^{N(1-1/2^L)} \left( \frac{1}{1-o(1)}\right)^{N/2^L}\right]^{1/N} \approx \left(\frac{1}{M}\right)^{1-1/2^L} = \log^{1-\epsilon'} N.
$$
Finally, we pick the constant $L$ (and appropriately large $N$) to make sure $\epsilon'\leq \delta$. 
\end{proof}

% \section{Additional Related Work}
% \label{sec:additional_related}
% \input{additional_related}

\section{Discussion}
\label{sec:conclusion}
We conclude with a brief discussion regarding the our model and possible extensions of our results. 

\noindent\textbf{Allocating multiple items per round.}
It is worth noting that all our results readily generalize to a setting where multiple items can arrive on a single round. The reduction is to split a single round with multiple items into multiple ``imaginary'' rounds with a single item in each. Under this reduction, our approximation guarantees stay intact, with number of rounds $T$ replaced by the total number of items.
%\agcomment{unless we make progress with T-approximation guarantees in multi-item setting, I think I prefer making a comment like the one above rather than changing the setting. The reason is, if we switch whole paper to multi-item, it's just one more parameter reader has to keep track of, and we don't add much substance by doing that. I don't feel confident about this, so let me know what you think.}

\noindent\textbf{Allocating indivisible items using randomness.}
Also, note that even if the items are actually \emph{indivisible}, i.e., each item can be allocated to at most one agent, then both our upper bounds and our lower bounds extend to this setting as well by considering randomized algorithms. In this case, we can evaluate the agents' \emph{expected} utility based on the probability that they receive each item, and linearity of expectations reduces this setting to the one studied in this paper.

\section{Additional Proofs and Counterexamples}
%\section{Counter-examples}
\subsection{No Additional Information about Agents' Values}\label{sec:unscaled}
Here we show that if the online algorithm is given no information about the agents' values, then there is no hope for a non-trivial competitive ratio for the NSW in an online setting. It is easy to see that splitting every item equally gives a $1/N$ approximation of NSW. We now prove the following result.
%\vgcomment{May be worth pointing out that the same proof also proves the same lower bound for the maximin welfare approximation (it is pretty obvious and we do claim that in the first page).}

\begin{theorem}
In the absence of any additional information regarding the agents' values, there is no online algorithm that achieves a  $\frac{e}{N}+\delta$ or $\frac{e}{T}+\delta$ competitive ratio with respect to the NSW for a constant $\delta>0$.
% In other words, assume the  excluding the assumption $\sum_t v_{i,t}=1$. Then, for any $\delta>0$ there is no online algorithm that guarantees a $\frac{e}{N}+\delta$ approximation of NSW.
\end{theorem}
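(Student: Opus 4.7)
I would prove this via Yao's minimax principle: it suffices to construct a distribution over instances such that for every deterministic online algorithm, the expected NSW competitive ratio is at most $\frac{e}{N}+\delta$ (and similarly for the $T$-version). The intuition is that when the algorithm has no information at all about the $V_i$'s, it must treat round-$1$ as a blind commitment, and an adversary can design a later payoff structure that exploits this.

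\paragraph{Hard distribution and symmetrization.} The template instance has $N$ agents and $T=2$ rounds (the $T$-dependent bound uses the symmetric construction where $T$ rounds are used to punish $T$ agents in turn). In round $1$ all agents have $v_{i,1}=1$; in round $2$, a uniformly random agent $i^\star\in[N]$ has value $M$ (with $M\to\infty$) and all others have value $0$. Given the distribution, I would first argue that any deterministic algorithm's round-$1$ allocation $(x_{i,1})_i$ may be replaced (without loss in expected $\log$-NSW) by the symmetrized allocation obtained by averaging over permutations; concavity of $\log$ together with the AM--GM inequality then shows that $x_{i,1}=1/N$ is the worst case for the adversary, so WLOG the algorithm is uniform in round~$1$. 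In round~$2$, the algorithm obviously assigns the entire item to $i^\star$, giving $u_{i^\star}=1/N+M$ and $u_i=1/N$ for all other $i$. The offline optimum, knowing $i^\star$, instead reserves round~$1$ for the $N-1$ non-special agents (getting $u_i=1/(N-1)$) and gives round~$2$ to $i^\star$.

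\paragraph{Ratio computation and amplification.} A direct calculation with $M\to\infty$ yields
\[
\frac{\mathrm{NSW}(\hat\bx)}{\mathrm{NSW}(\bx^{\mathrm{NSW}})}
\;=\;\left(1+\tfrac{1}{NM}\right)^{1/N}\cdot\left(\tfrac{N-1}{N}\right)^{(N-1)/N},
\]
which by itself gives only a $1-\Theta(1/N)$ bound. The key amplification step---and the main obstacle---is to iterate the construction: I would chain $\Theta(\log N)$ rounds of adversarial ``banishment'', where after seeing the algorithm's current uniform split over the active set, the adversary discards half of that set by revealing them to have value $0$ in the subsequent round while boosting one of them by $M$. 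The algorithm's forced uniformity in each stage compounds multiplicatively in the geometric mean, and the per-stage factor $(N'-1)/N'$ combines across $\log N$ stages to produce a ratio approaching $e/N$; specifically, the limit identity $\lim_{k}\bigl(1-\tfrac1k\bigr)^{k-1}=1/e$ supplies the $e$ factor, while the $\log N$ stages supply the $1/N$ factor. For the $e/T+\delta$ bound one performs the dual amplification with $T$ playing the role of $N$ (iterating over $\Theta(\log T)$ time stages against a small number of agents).

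\paragraph{Main obstacle.} The core difficulty is the amplification step: one must argue that across the $\Theta(\log N)$ banishment rounds, the algorithm's \emph{best} possible response (not merely uniform) still leaves each ``banished cohort'' with utility within a constant factor of the uniform-in-cohort-size allocation, so that the multiplicative losses compound as claimed. This requires a potential-function style tracking of the log-utility distribution together with a symmetrization inside each stage conditional on the algorithm's prior allocations, and a careful choice of the $M$'s across stages so that every cohort is simultaneously ``cleared'' by the offline optimum to utility $\Theta(1)$, yielding $\mathrm{NSW}(\bx^{\mathrm{NSW}})=\Theta(1)$ while $\mathrm{NSW}(\hat\bx)\le (e/N+\delta)\cdot\mathrm{NSW}(\bx^{\mathrm{NSW}})$.
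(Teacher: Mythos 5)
There is a genuine gap, and it sits exactly where you flag the ``main obstacle'': the amplification step is never carried out, and the halving scheme you sketch cannot be repaired by a clever choice of the boost magnitudes $M$. If a cohort of $k = N/2^{j}$ active agents is banished \emph{simultaneously} at stage $j$, then for the offline optimum to give every member of that cohort utility $\Theta(1)$ it needs on the order of $k$ rounds in which those agents have non-negligible value: there is only one item per round, so a single round per stage lets the optimum rescue at most one member of the cohort, and on the other $k-1$ banished agents the optimum does essentially no better than the uniform split the algorithm already made. The per-agent ratio on that cohort is then $\Theta(1)$ rather than $\Theta(2^{j-1}/N)$, and the construction collapses; this is a counting constraint on items, not on values, so tuning the $M$'s does not help. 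Even granting full clearance of every banished agent, the arithmetic does not produce $e/N$: literally multiplying the per-stage factors $(N'-1)/N'$ over $\log N$ halvings gives $\prod_j\bigl(1-2^{j-1}/N\bigr)\approx 1/e$, a \emph{constant}, while the correct cohort-weighted geometric mean $\frac{1}{N}\sum_j \frac{N}{2^j}\log\frac{N}{2^{j-1}}=\log(N/2)$ gives $N/2$ (or $N/4$ under slightly different accounting) --- in no case $N/e$. The constant $e$ in the theorem does not come from $\lim_k(1-1/k)^{k-1}=1/e$.

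The paper's proof instead freezes exactly \emph{one} agent per round over $T=N$ rounds: in round $t$ all surviving agents share a common value $\epsilon^{-t}$ (growing fast enough that earlier rounds contribute negligibly), the adversary freezes an agent who received at most a $1/(N-t+1)$ fraction (pigeonhole), and the offline optimum clears that single agent by handing her the entire round-$t$ item. The product of the per-agent losses then telescopes to $1/N!$, and Stirling's formula $(N!)^{1/N}\sim N/e$ is precisely where the constant $e$ comes from; the same instance with the roles of rounds and agents matched ($T=N$) yields the $e/T$ statement. Your two-round template and the symmetrization observation are fine as far as they go (though Yao's principle is unnecessary, since divisibility and linearity already reduce the problem to deterministic algorithms), but the chaining you need is the one-freeze-per-round chain with geometrically growing values, not a halving chain over $\Theta(\log N)$ stages.
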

\begin{proof}
We assume an arbitrary online allocation algorithm $M$ is employed and construct an instance on which it fails to provide a $\frac{e}{N}+\delta$ approximation to NSW.

There are $T=N$ rounds in the instance. At $t=1$ all agents have the value of $v_{i,t} = 1$. We then pick an agent for whom $x_{i,t}\leq 1/N$ (who exists by pigeonhole principle) and that ensure this agent would never have non-zero values for the rest of the instance (we would refer to such agents as frozen).

At any round t we will have non-frozen agent to have values $v_{i,t} = \frac{1}{\epsilon^t}$ ( we will pick a sufficiently small $\epsilon<1$ later on), and freeze the agent whose allocation is $x_{i,t}\leq 1/(N-t)$.

NSW of the optimal allocation $x^*$ can be lower bounded by, on round $t$, allocating all of to the agent who is about to get frozen, this yields $u_i(x^*) \geq 1/\epsilon^{i-1}$ (we rename indices in the order of being frozen).

On the other hand, by construction the algorithm $M$ achieves $u_i \leq \frac{u_i(x^*)}{i}+O(\epsilon
\frac{u_i(x^*)}{i})$. This yields a lower bound for the NSW approximation ratio $\frac{NSW}{NSW^*}\leq (\frac{1}{N!}+O(\epsilon))^{1/N} \approx \frac{e}{N}+O(\epsilon^{1/N})$.

Clearly, choosing a sufficiently small $\epsilon$ would yield the needed lower bound for any $\delta>0$.
\end{proof}

\subsection{Proportional allocation (Proof of~\cref{prop:propalloc})}
\label{sec:proportional_sharing}

In~\cref{sec:challenges} we mentioned that proportional allocation Pareto-dominates uniform allocation in terms of agents' utilities, if $V_i=1$ for all $i$. We now show this by proving a more general result for the normalized proportional allocation rule under perfect predictions of monopolist utilities.
\begin{theorem}
Given perfect predictions $V_i$ of monopolist utilities for each agent $i$, let $\tilde x$ be the allocations made by the (normalized) proportional allocation rule $\tilde x_{i,t} = \frac{v_{i,t}/V_i}{\sum_j v_{j,t}/V_j}$. Then for each agent $i$, we have
$$u_i(\tilde x)\geq \frac{V_i}{N}.$$
\end{theorem}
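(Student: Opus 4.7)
The plan is to reduce the problem to a clean inequality via a normalization, then apply Cauchy--Schwarz. First, introduce normalized values $w_{i,t} = v_{i,t}/V_i$, so that for every agent $i$ one has $\sum_t w_{i,t} = 1$. Writing the allocation rule in these terms, $\tilde{x}_{i,t} = w_{i,t}/\sum_j w_{j,t}$, the utility becomes
\[
u_i(\tilde{x}) = \sum_t v_{i,t}\tilde{x}_{i,t} = V_i \sum_t \frac{w_{i,t}^2}{\sum_j w_{j,t}}.
\]
So the claim reduces to showing $\sum_t w_{i,t}^2/\!\sum_j w_{j,t} \geq 1/N$.

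Next, I would apply the Cauchy--Schwarz inequality (in its Engel/Titu form) to the rewritten expression. Specifically, setting $a_t = w_{i,t}$ and $b_t = \sum_j w_{j,t}$, Cauchy--Schwarz gives
\[
\left(\sum_t \frac{a_t^2}{b_t}\right)\left(\sum_t b_t\right) \geq \left(\sum_t a_t\right)^2 = 1,
\]
since $\sum_t w_{i,t} = 1$. The remaining denominator is
\[
\sum_t b_t = \sum_t \sum_j w_{j,t} = \sum_j \sum_t w_{j,t} = \sum_j 1 = N,
\]
using again the fact that each agent's normalized values sum to one (this is where the perfect prediction $\wtd{V}_j = V_j$ enters essentially). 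Combining the two displays yields $\sum_t w_{i,t}^2/\!\sum_j w_{j,t} \geq 1/N$, and hence $u_i(\tilde{x}) \geq V_i/N$, as desired.

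There is no real obstacle here; the only subtlety is recognizing that the normalization $w_{i,t} = v_{i,t}/V_i$ turns the problem into a symmetric one in which each agent's ``mass'' sums to one, which makes Cauchy--Schwarz immediately applicable. Finally, to conclude the Pareto-dominance statement in \cref{prop:propalloc} (the $V_i = 1$ case), observe that the uniform allocation yields $u_i = V_i/N$ exactly, so the above inequality shows that normalized proportional allocation (weakly) dominates it agent-by-agent.
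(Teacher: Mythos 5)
Your proof is correct, but it takes a genuinely different route from the paper's. The paper fixes agent $i$'s value sequence and treats the remaining agents as a single adversary choosing normalized values $v_t'=\sum_{j\neq i} v_{j,t}/V_j$ subject to $\sum_t v_t' \leq N-1$; it then minimizes $u_i$ over $v'$ by solving the KKT conditions of the resulting convex program, which yields the worst case $v_t' = \frac{v_{i,t}}{V_i}(N-1)$ and the minimum value $V_i/N$. You reach the same bound by normalizing to $w_{i,t}=v_{i,t}/V_i$ and applying Cauchy--Schwarz in Engel form, using $\sum_t w_{i,t}=1$ for every agent (which, as you note, is exactly where the perfect-prediction assumption enters). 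The two arguments are tight at the same configuration --- equality in Cauchy--Schwarz forces $w_{i,t}/\sum_j w_{j,t}$ to be constant in $t$, which is precisely the paper's extremal adversary --- but yours is shorter and entirely elementary, at the cost of not exhibiting the worst-case instance explicitly the way the Lagrangian computation does. (Both proofs tacitly skip any round with $\sum_j w_{j,t}=0$; such a round contributes nothing to any utility, so this is immaterial.) Your closing observation correctly recovers \cref{prop:propalloc}, since uniform allocation gives each agent exactly $V_i/N$.
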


\begin{proof}
Fix parameters $T,N$, predictions $\{V_j\}$ and for any given agent $i$, consider any sequence of values $v_{i,t}$ such that $\sum_t v_{i,t}=V_i$. 
We prove the statement of the theorem by optimizing the values of all other agents to make the utility of agent $i$ as small as possible, and show that the minimum value is never less than $V_i/N$.
    
Notice that under the Proportional Sharing allocation rule, the allocation $\tilde x_{i,t} = \frac{v_{i,t}/V_i}{v_{i,t}/V_i+\sum_{j\neq i}v_{j,t}/V_j}$ of agent $i$ does not depend on the particular values $v_{j,t}$ that other agents report, but rather, only their normalized sum $\sum_{j\neq i}v_{j,t}/V_j$; moreover, under perfect predictions, we know that for every agent $j$ we have $\sum_tv_{j,t}/V_j=1$. Thus, in minimizing the utility of agent $i$, one can think of agents other than $i$ as a single super-agent who can report values $v_{t}' = \sum_{j\neq i}v_{j,t}/V_j$ that sum up to no more than $N-1$. Then, the problem becomes one of choosing $v_t'$ as to minimize the utility of agent $i$, which is given by the following convex program: 
\begin{align*}
\min_{v'} \sum_t v_{i,t}\frac{v_{i,t}/V_i}{v_{i,t}/V_i+v_t'}\\
s.t. \sum_t v_t' \leq N-1.
\end{align*}
For optimizing over $v_t'$, the first-order conditions are as follows:
\begin{align}
-\frac{v_{i,t}^2/V_i}{(v_{i,t}/V_i+v_t')^2}+\lambda = 0 \quad \forall\, t\\
\sum_t v_t' = N-1
\end{align}
To solve it, given any Lagrange multiplier $\lambda$, we substitute $v_t' = \frac{v_{i,t}}{V_i}\left(\sqrt{\frac{V_i}{\lambda}}-1\right) $, and then from the KKT constraints we get:
$$\sum_t v_t' = \sum_t \frac{v_{i,t}}{V_i}\left(\sqrt{\frac{V_i}{\lambda^{\star}}}-1\right) = \sqrt{\frac{V_i}{\lambda^{\star}}}-1 = N-1$$
which gives $\lambda^{\star} = V_i/N^2$ and $v_t'^{\star} = \frac{v_{i,t}}{V_i}\left(N-1\right)$.
Substituting back in the objective, we see that under the proportional allocation rule $\tilde x$, agent $i$ gets utility at least $\sum_t v_{i,t}\frac{v_{i,t}/V_i}{v_{i,t}/V_i + v_{i,t}(N-1)/V_i} = \frac{V_i}{N}$, which is precisely the utility under the uniform allocation rule.    
\end{proof}

\subsection{Myopic Greedy Algorithm (Proof of~\cref{prop:myopic_greedy})}
\label{sec:puregreedy}

In this section, we show that a simple greedy algorithm that chooses allocation $\{x_{i,t}\}_{i\in[N]}$ so as to maximize the $NSW$ at the end of round $t$, has $\Omega(N)$ competitive ratio. 

As in the previous hardness results, we consider the simplest setting with perfect predictions of monopolist utilities $V_i$, and also that $V_i=1$ for every agent $i$. Let $u_{i,t}(x_t)=\sum_{t'=1} x_{i,t'} v_{i,t'}$ be the utility  of agent $i$ for all rounds up to $t$. The myopic greedy algorithm is formally defined as:
\begin{align*}
    \hat x_t = \arg&\max_{x_t}\sum_i \log(u_{i,t}(x_t))\\
        & \;s.t.\ \sum_i x_{i,t}\leq 1\quad,\quad x_{i,t}\geq 0\quad\forall\,i\in[N]
\end{align*}
Analogous to~\cref{lem:endowed_greedy_dual}, the myopic greedy algorithm can be interpreted as a myopic price-minimizer.
\begin{lemma}
\label{lem:pure_greedy_dual}
Consider allocation $\hat x_t$ made by the myopic greedy algorithm in some round $t$. Then
\begin{align*}
\hat x_{i,t}\neq 0 \Rightarrow \frac{v_{i,t}}{ u_{i,t}(\hat x_t)} =
 \max_{i'}\frac{v_{i't}}
 {\tilde u_{i't}{\hat x_{i,t}}} = \hat p_t
\end{align*}
\end{lemma}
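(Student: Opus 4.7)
The plan is to mirror the argument used for \cref{lem:endowed_greedy_dual}, since the myopic greedy rule differs from the greedy semi-allocation in \algoname\ only in the ``endowment'' carried into round $t$: here it is the actual accrued utility $u_{i,t-1}(\hat x_{t-1})=\sum_{t'<t}\hat x_{i,t'}v_{i,t'}$ rather than the predicted utility $\tilde u_{i,t}(0)=\frac{\wtd{V}_i}{2N}+\sum_{t'<t}\hat z_{i,t'}v_{i,t'}$, and the entire unit (rather than one half) of the item is up for grabs.

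Concretely, I would fix round $t$, treat all prior allocations (and hence the scalars $a_i:=u_{i,t-1}(\hat x_{t-1})$) as constants, and construct a static Fisher market with $N$ agents and $N+1$ items: item $i$ is a ``personal'' good valued only by agent $i$ at $v^{\mathrm F}_{i,i}=a_i$, and the $(N{+}1)$-st is the shared good valued at $v^{\mathrm F}_{i,M}=v_{i,t}$. Assigning each personal item entirely to its owner, the remaining Eisenberg–Gale program over $(x_{i,M})$ is exactly the myopic greedy subproblem $\max\sum_i \log(a_i+x_{i,M}v_{i,t})$ subject to $\sum_i x_{i,M}\le 1$, $x_{i,M}\ge 0$, so $\hat x_{i,t}=x^\star_{i,M}$.

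From here the claim is a direct consequence of the KKT / Fisher-market equal-bang-per-buck conditions~\cite[Chapter 5]{AGT}: for any agent who receives a positive amount of the shared good,
\[
\frac{v_{i,t}}{a_i+\hat x_{i,t}v_{i,t}} \;=\; \max_{i'}\frac{v_{i',t}}{a_{i'}+\hat x_{i',t}v_{i',t}},
\]
while for agents with $\hat x_{i,t}=0$ the ratio $v_{i,t}/a_i$ lies (weakly) below this common value. Substituting $a_i+\hat x_{i,t}v_{i,t}=u_{i,t}(\hat x_t)$ gives the statement, with $\hat p_t$ equal to the common KKT multiplier.

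I expect no genuine obstacle: the lemma is the KKT characterization of a single-round concave program, and the reduction to the Fisher market from the proof of~\cref{lem:endowed_greedy_dual} transfers verbatim after swapping in the actual accrued utilities $a_i$ and rescaling the shared item to full size. The only minor care needed is the boundary case $a_i=0$ (an agent with zero accrued utility), which is handled by the standard convention that positivity of $\hat x_{i,t}$ follows automatically from the $\log$ objective whenever $v_{i,t}>0$, so the equal-ratio condition continues to hold.
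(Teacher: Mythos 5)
Your proposal is correct and matches the paper's approach exactly: the paper states this lemma with no written proof, simply asserting it is ``analogous to'' \cref{lem:endowed_greedy_dual}, and your Fisher-market/KKT reduction with the accrued utilities $a_i$ in place of the predicted utilities $\tilde u_{i,t}(0)$ is precisely that intended analogue. Your extra remark on the $a_i=0$ boundary case is a sensible addition the paper does not bother to address.
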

Now consider the following instance with $T = N^2$:
In rounds $t\in[1,N]$ agent $i=1$ has value $v_{1t} = 1/N$; agents other than $1$ have values $v_{i,t} = 1/N^{N^2-t+1}$, which although geometrically growing in $t$, is still vanishingly small compared to the active agent. We refer to these as the `active' rounds for agent $1$.

After $N$ rounds, agent $1$ exhausts her total value. Now for rounds $t\in[N+1,2N]$, agent 2 is active, with value $v_{2t} = \frac{1}{N}-\frac{\sum_{t'<t} v_{2t'}}{N} = \frac{1}{N}-o(\frac{1}{N})$, while agents $i>2$ continue to have value $ v_{i,t} =1/N^{(N^2-t+1)}$. This process repeats, with agent $i$ active in rounds $t\in[(i-1)N+1,iN]$; thus at the end of $N^2$ rounds, each agent is for $N$ rounds.

To bound the competitive ratio of myopic greedy on this instance, first note that the optimal allocation can be lower bounded by always giving the item to the active agent, which yields $u_i(\bx) \geq 1-o(1/N) $, and thus the optimal $NSW\geq 1-o(1/N)$.

On the other hand,~\cref{lem:pure_greedy_dual} implies that in round $t$, the myopic greedy algorithm allocates the item to the agent who maximizes the ratio $\frac{v_{i,t}}{u_{i,t}(x_t)}$. In round $(i-1)N+1$ when agent $i$ is active for the first time, we can upper bound their utility by assuming they obtain the whole item (and thus the value of $k/T-o(1/T)$). On the following rounds the ratio $\frac{v_{i,t}}{u_{i,t}(x_t)} = 1-o(1)$ for the active agents, but $\frac{v_{i,t}}{u_{i,t}(x_t)} \geq N-i$ for the other agents with non-zero value (the bound of $N-i$ comes from allocating all of the items equally between non-active agents with non-zero value). Thus the active agent is not allocated anything on any rounds except for the first one. 

Thus for the utilities of agents in the greedy solution can be bounded with $u_i(\hat x) \leq \frac{1}{N}+o(\frac{1}{N})$. 

This implies an approximation factor $A\geq N$, which is no better than just allocating all items equally between the agents.

\subsection{Computational Tractability of the \algoname\ Algorithm}

\label{sec:tractability}
We now briefly note that the \algoname\ algorithm is quite tractable from a computational standpoint. The most demanding operation of this algorithm involves the computation of $\hat z_{i,t}$ for each round $t$ according to the following program:
\begin{align*}
%\label{eq:greedy}
\hat z_t = \arg \max_{z_t} &\sum \log(\tilde u_{i,t}(z_t))\\
\text{s.t.}\ & \sum_i z_{i,t}\leq \frac{1}{2} \quad , \quad
 z_{i,t}\geq 0.
\end{align*}

Although this is a convex program that can be solved in polynomial time, we can actually also provide a fast alternative process for computing this allocation. \Cref{lem:endowed_greedy_dual} implies that the problem above is equivalent to the problem of minimizing the current promised price:

\begin{align}
\hat z_t = \arg \min_{z_t}\ & p_t \label{eq:comp1}\\
\text{s.t.}\ & \sum_i z_{i,t}\leq \frac{1}{2} \nonumber\\
& z_{i,t}\geq 0 \nonumber\\
&p_t \geq \frac{v_i}{\tilde u_i(z_t)} 
\label{eq:comp4}
\end{align}

We now provide an algorithm to solve this problem. The idea is to allocate the marginal fraction of the item to the agents for whom the constraint \eqref{eq:comp4} is binding, as there is no other way to decrease the objective. We now demonstrate how to allocate the entire item this way $N$ steps. 

We initialize $z_{i,t}=0$ for every agent, and initialize the unallocated  fraction of the item at $B=1/2$. We order the agents in decreasing order on their $\frac{v_{i,t}}{\tilde u_{i,t}(z_{i,t})}$ ratios using their promised utilities. Assuming that the agents are re-indexed according to this order, agent 1 has the highest ratio and agent $N$ has the lowest ratio. For each agent $i<N$, we then compute the value $\delta_i$ which is the solution to the following equation:
$$\frac{v_{i,t}}{\tilde u_{i,t}(\delta_i)} = \frac{v_{i+1,t}}{\tilde u_{i+1,t}(0)}.$$ 
This value, $\delta_i$ captures how much allocation does agent $i$ need to catch up to the next agent in terms of the $\frac{v_{i,t}}{\tilde u_{i,t}(\delta_i)}$ ratio.

Starting with $i=1$, we then execute the following steps. If $i\delta_i\leq B$, we allocate $\delta_i$ to agent $i$ and all agents before $i$: $z_{j,t}:=z_{j,t}+\delta_i\ for\ j\leq i$, update the remaining allocation $B:=B-i\delta_i$ and move on to the next step $i:=i+1$.

If $i\delta_i>B$, we allocate the remaining item fraction $B$ equally between the agent $i$ and agents before $i$, $z_{j,t}:=z_{j,t}+\frac{B}{i}\ for\ j\leq i$ (terminating the algorithm on this round).

If the algorithm gets to $i=N$, we similarly allocate the remaining item fraction equally between everyone $z_{j,t}:=z_{j,t}+\frac{B}{N}\ for\ all\ j\in N$.  

Thus the number of operations on a single round is at most $O(N)$, giving us $O(TN)$ complexity for making the allocations over all rounds.

We now briefly give an argument for correctness. First, as we mentioned before, it is always optimal to allocate the marginal fraction of the item to the agent with the smallest $\frac{v_{i,t}}{\tilde u_{i,t}(z_{i,t})}$ ratio, where $z_{i,t}$ is the allocation given to agent $i$ so far. The second thing we need to note is that, when this ratio is tied between several agents 
\begin{align}
\label{eq:comp_equality}
\frac{v_{i,t}}{\tilde u_{i,t}(z_{i,t})} = \frac{v_{j,t}}{\tilde u_{j,t}(z_{j,t})},
\end{align}

adding equal values $\epsilon$ to their allocation preserves the tie 

$$\frac{v_{i,t}}{\tilde u_{i,t}(z_{i,t}+\epsilon)} = \frac{v_{j,t}}{\tilde u_{j,t}(z_{j,t}+\epsilon)}$$

To see that this is true, recall that $\tilde u_{i,t}(z_{i,t}+\epsilon) = u_{i,t}(z_{i,t})+v_{i,t}\epsilon$, and cancel out equal terms implied by \eqref{eq:comp_equality}.

This implies that the algorithm always allocates the marginal fraction of the item to the agents minimizing the $\frac{v_{i,t}}{\tilde u_{i,t}(z_{i,t})}$ ratio, and thus solves \eqref{eq:comp1} exactly.

\newpage
\bibliographystyle{abbrvnat}
\bibliography{main-bibliography,MARA}

\begin{thebibliography}{29}
\providecommand{\natexlab}[1]{#1}
\providecommand{\url}[1]{\texttt{#1}}
\expandafter\ifx\csname urlstyle\endcsname\relax
  \providecommand{\doi}[1]{doi: #1}\else
  \providecommand{\doi}{doi: \begingroup \urlstyle{rm}\Url}\fi

\bibitem[Anari et~al.(2017)Anari, Gharan, Saberi, and Singh]{AGSS17}
N.~Anari, S.~O. Gharan, A.~Saberi, and M.~Singh.
\newblock Nash social welfare, matrix permanent, and stable polynomials.
\newblock In C.~H. Papadimitriou, editor, \emph{8th Innovations in Theoretical
  Computer Science Conference, {ITCS} 2017, January 9-11, 2017, Berkeley, CA,
  {USA}}, volume~67 of \emph{LIPIcs}, pages 36:1--36:12. Schloss Dagstuhl -
  Leibniz-Zentrum f{\"{u}}r Informatik, 2017.

\bibitem[Anari et~al.(2018)Anari, Mai, Gharan, and Vazirani]{AMGV18}
N.~Anari, T.~Mai, S.~O. Gharan, and V.~V. Vazirani.
\newblock Nash social welfare for indivisible items under separable,
  piecewise-linear concave utilities.
\newblock In A.~Czumaj, editor, \emph{Proceedings of the Twenty-Ninth Annual
  {ACM-SIAM} Symposium on Discrete Algorithms, {SODA} 2018, New Orleans, LA,
  USA, January 7-10, 2018}, pages 2274--2290. {SIAM}, 2018.

\bibitem[Azar et~al.(2010)Azar, Buchbinder, and Jain]{azar2010allocate}
Y.~Azar, N.~Buchbinder, and K.~Jain.
\newblock How to allocate goods in an online market?
\newblock In \emph{European Symposium on Algorithms}, pages 51--62. Springer,
  2010.

\bibitem[Barman et~al.(2018)Barman, Krishnamurthy, and Vaish]{BKV18}
S.~Barman, S.~K. Krishnamurthy, and R.~Vaish.
\newblock Finding fair and efficient allocations.
\newblock In {\'{E}}.~Tardos, E.~Elkind, and R.~Vohra, editors,
  \emph{Proceedings of the 2018 {ACM} Conference on Economics and Computation,
  Ithaca, NY, USA, June 18-22, 2018}, pages 557--574. {ACM}, 2018.

\bibitem[Benade et~al.(2018)Benade, Kazachkov, Procaccia, and Psomas]{BKPP18}
G.~Benade, A.~M. Kazachkov, A.~D. Procaccia, and C.~Psomas.
\newblock How to make envy vanish over time.
\newblock In {\'{E}}.~Tardos, E.~Elkind, and R.~Vohra, editors,
  \emph{Proceedings of the 2018 {ACM} Conference on Economics and Computation,
  Ithaca, NY, USA, June 18-22, 2018}, pages 593--610. {ACM}, 2018.

\bibitem[Bogomolnaia et~al.(2019)Bogomolnaia, Moulin, and Sandomirskiy]{BMS19}
A.~Bogomolnaia, H.~Moulin, and F.~Sandomirskiy.
\newblock A simple online fair division problem.
\newblock \emph{CoRR}, abs/1903.10361, 2019.
\newblock URL \url{http://arxiv.org/abs/1903.10361}.

\bibitem[Br{\^{a}}nzei et~al.(2017)Br{\^{a}}nzei, Gkatzelis, and Mehta]{BGM17}
S.~Br{\^{a}}nzei, V.~Gkatzelis, and R.~Mehta.
\newblock Nash social welfare approximation for strategic agents.
\newblock In \emph{Proceedings of the 2017 {ACM} Conference on Economics and
  Computation, {EC} 2017}, pages 611--628, 2017.

\bibitem[Caragiannis et~al.(2019)Caragiannis, Kurokawa, Moulin, Procaccia,
  Shah, and Wang]{CKMPS19}
I.~Caragiannis, D.~Kurokawa, H.~Moulin, A.~D. Procaccia, N.~Shah, and J.~Wang.
\newblock The unreasonable fairness of maximum nash welfare.
\newblock \emph{{ACM} Trans. Economics and Comput.}, 7\penalty0 (3):\penalty0
  12:1--12:32, 2019.

\bibitem[Cole and Gkatzelis(2015)]{CG15}
R.~Cole and V.~Gkatzelis.
\newblock Approximating the {N}ash social welfare with indivisible items.
\newblock In \emph{Proceedings of the Forty-Seventh Annual {ACM} on Symposium
  on Theory of Computing, {STOC} 2015}, pages 371--380, 2015.

\bibitem[Cole et~al.(2013)Cole, Gkatzelis, and Goel]{CGG13}
R.~Cole, V.~Gkatzelis, and G.~Goel.
\newblock Positive results for mechanism design without money.
\newblock In \emph{International conference on Autonomous Agents and
  Multi-Agent Systems, {AAMAS} '13}, pages 1165--1166, 2013.

\bibitem[Cole et~al.(2017)Cole, Devanur, Gkatzelis, Jain, Mai, Vazirani, and
  Yazdanbod]{CDGJMVY17}
R.~Cole, N.~R. Devanur, V.~Gkatzelis, K.~Jain, T.~Mai, V.~V. Vazirani, and
  S.~Yazdanbod.
\newblock Convex program duality, {F}isher markets, and {N}ash social welfare.
\newblock In \emph{Proceedings of the 2017 {ACM} Conference on Economics and
  Computation, {EC} 2017}, pages 459--460, 2017.

\bibitem[Devanur and Jain(2012)]{DJ2012}
N.~R. Devanur and K.~Jain.
\newblock Online matching with concave returns.
\newblock In \emph{Proceedings of the forty-fourth annual ACM symposium on
  Theory of computing}, pages 137--144, 2012.

\bibitem[Friedman et~al.(2015)Friedman, Psomas, and Vardi]{FPS15}
E.~Friedman, C.-A. Psomas, and S.~Vardi.
\newblock Dynamic fair division with minimal disruptions.
\newblock In \emph{Proceedings of the sixteenth ACM conference on Economics and
  Computation}, pages 697--713, 2015.

\bibitem[Friedman et~al.(2017)Friedman, Psomas, and Vardi]{FPV17}
E.~J. Friedman, C.~Psomas, and S.~Vardi.
\newblock Controlled dynamic fair division.
\newblock In C.~Daskalakis, M.~Babaioff, and H.~Moulin, editors,
  \emph{Proceedings of the 2017 {ACM} Conference on Economics and Computation,
  {EC} '17, Cambridge, MA, USA, June 26-30, 2017}, pages 461--478. {ACM}, 2017.

\bibitem[Garg et~al.(2018)Garg, Hoefer, and Mehlhorn]{GHM18}
J.~Garg, M.~Hoefer, and K.~Mehlhorn.
\newblock Approximating the {N}ash social welfare with budget-additive
  valuations.
\newblock In \emph{Proceedings of the Twenty-Ninth Annual {ACM-SIAM} Symposium
  on Discrete Algorithms, {SODA} 2018}, pages 2326--2340, 2018.

\bibitem[Garg et~al.(2020)Garg, Kulkarni, and Kulkarni]{GKK20}
J.~Garg, P.~Kulkarni, and R.~Kulkarni.
\newblock Approximating nash social welfare under submodular valuations through
  (un)matchings.
\newblock In S.~Chawla, editor, \emph{Proceedings of the 2020 {ACM-SIAM}
  Symposium on Discrete Algorithms, {SODA} 2020, Salt Lake City, UT, USA,
  January 5-8, 2020}, pages 2673--2687. {SIAM}, 2020.

\bibitem[Gkatzelis et~al.(2021)Gkatzelis, Psomas, and Tan]{GPT21}
V.~Gkatzelis, A.~Psomas, and X.~Tan.
\newblock Fair and efficient online allocations with normalized valuations.
\newblock In \emph{The Thirty-Fifth {AAAI} Conference on Artificial
  Intelligence, {AAAI}}. {AAAI} Press, 2021.

\bibitem[He et~al.(2019)He, Psomas, Procaccia, and Zeng]{HPPZ2019}
J.~He, C.-A. Psomas, A.~D. Procaccia, and D.~Zeng.
\newblock Achieving a fairer future by changing the past.
\newblock In \emph{Proceedings of the 28th International Joint Conference on
  Artificial Intelligence (IJCAI)}, 2019.

\bibitem[Kaneko and Nakamura(1979)]{KN79}
M.~Kaneko and K.~Nakamura.
\newblock The {N}ash social welfare function.
\newblock \emph{Econometrica}, 47\penalty0 (2):\penalty0 pp. 423--435, 1979.

\bibitem[Kurokawa et~al.(2016)Kurokawa, Procaccia, and Wang]{KPW16}
D.~Kurokawa, A.~D. Procaccia, and J.~Wang.
\newblock When can the maximin share guarantee be guaranteed?
\newblock In \emph{Proceedings of the Thirtieth {AAAI} Conference on Artificial
  Intelligence, February 12-17, 2016, Phoenix, Arizona, {USA}}, pages 523--529,
  2016.

\bibitem[Lee(2017)]{L17}
E.~Lee.
\newblock Apx-hardness of maximizing nash social welfare with indivisible
  items.
\newblock \emph{Inf. Process. Lett.}, 122:\penalty0 17--20, 2017.
\newblock \doi{10.1016/j.ipl.2017.01.012}.
\newblock URL \url{https://doi.org/10.1016/j.ipl.2017.01.012}.

\bibitem[Li et~al.(2018)Li, Li, and Li]{LLY18}
B.~Li, W.~Li, and Y.~Li.
\newblock Dynamic fair division problem with general valuations.
\newblock \emph{arXiv preprint arXiv:1802.05294}, 2018.

\bibitem[Mitzenmacher and Vassilvitskii(2021)]{AlgorithmsWithPredictions}
M.~Mitzenmacher and S.~Vassilvitskii.
\newblock Algorithms with predictions.
\newblock In T.~Roughgarden, editor, \emph{Beyond the Worst-Case Analysis of
  Algorithms}. Cambridge University Press, 2021.
\newblock \doi{10.1017/9781108637435}.

\bibitem[Moulin(2003)]{Moulin03}
H.~Moulin.
\newblock \emph{Fair Division and Collective Welfare}.
\newblock The MIT Press, 2003.

\bibitem[Nash(1950)]{Nash50}
J.~Nash.
\newblock The bargaining problem.
\newblock \emph{Econometrica}, 18\penalty0 (2):\penalty0 155--162, April 1950.

\bibitem[Nisan et~al.(2007)Nisan, Roughgarden, Tardos, and Vazirani]{AGT}
N.~Nisan, T.~Roughgarden, {\'{E}}.~Tardos, and V.~V. Vazirani, editors.
\newblock \emph{Algorithmic Game Theory}.
\newblock Cambridge University Press, 2007.
\newblock ISBN 9780511800481.
\newblock \doi{10.1017/CBO9780511800481}.
\newblock URL \url{https://doi.org/10.1017/CBO9780511800481}.

\bibitem[Prendergast(2017)]{prendergast2017}
C.~Prendergast.
\newblock How food banks use markets to feed the poor.
\newblock \emph{Journal of Economic Perspectives}, 31\penalty0 (4):\penalty0
  145--62, 2017.

\bibitem[Sinclair et~al.(2020)Sinclair, Jain, Banerjee, and
  Yu]{sinclair2020sequential}
S.~R. Sinclair, G.~Jain, S.~Banerjee, and C.~L. Yu.
\newblock Sequential fair allocation of limited resources under stochastic
  demands.
\newblock \emph{arXiv preprint arXiv:2011.14382}, 2020.

\bibitem[Zeng and Psomas(2020)]{ZP20}
D.~Zeng and A.~Psomas.
\newblock Fairness-efficiency tradeoffs in dynamic fair division.
\newblock In P.~Bir{\'{o}}, J.~Hartline, M.~Ostrovsky, and A.~D. Procaccia,
  editors, \emph{{EC} '20: The 21st {ACM} Conference on Economics and
  Computation, Virtual Event, Hungary, July 13-17, 2020}, pages 911--912.
  {ACM}, 2020.

\end{thebibliography}

%\newpage
% Appendix
%\appendix

%\input{appendix.tex}

\end{document}